\documentclass[12pt,a4paper]{article}
\usepackage{amsmath,amsthm,amsfonts,amssymb,bbm,color}
\usepackage{graphicx,psfrag,subfigure,url}
\usepackage{cite}
\usepackage{hyperref}
\usepackage[english]{babel}

\newcommand{\Id}{\mathbbm{1}}
\newcommand{\Or}{\mathcal{O}}
\newcommand{\E}{\mathbbm{E}}
\newcommand{\e}{\varepsilon}
\newcommand{\I}{{\rm i}}
\newcommand{\C}{\mathbb{C}}
\newcommand{\R}{\mathbb{R}}
\newcommand{\N}{\mathbb{N}}
\newcommand{\Z}{\mathbb{Z}}
\newcommand{\dx}{\mathrm{d}}
\newcommand{\QHyp}[6]{{}_{#1}\phi_{#2}\!\left(\begin{matrix}#3\\#4\end{matrix} \, \bigg| \, #5;#6\right)}

\renewcommand{\Re}{\operatorname{Re}}

\numberwithin{equation}{section}

\DeclareMathOperator*{\Tr}{Tr}
\DeclareMathOperator*{\Var}{Var}

\DeclareMathOperator{\Ai}{Ai}

\DeclareMathOperator{\Pb}{\mathbbm{P}}

\DeclareMathAlphabet{\mathpzc}{OT1}{pzc}{m}{it}

\newtheorem{result}{Result}
\newtheorem{prop}{Proposition}[section]

\newtheorem{lem}[prop]{Lemma}

\newtheorem{cor}[prop]{Corollary}

\newtheorem{rem}[prop]{Remark}
\newenvironment{remark}{\begin{rem}\normalfont}{\end{rem}}

\title{Finite time corrections in KPZ growth models}

\author{Patrik L. Ferrari\thanks{Institute for Applied Mathematics, University of Bonn, Endenicher Allee 60,\newline 53115 Bonn, Germany; E-mail:~\texttt{ferrari@uni-bonn.de}} ,
Ren\'e Frings\thanks{Institute for Applied Mathematics, University of Bonn, Endenicher Allee 60,\newline 53115 Bonn, Germany; E-mail:~\texttt{frings@uni-bonn.de}}}

\date{8. August 2011}

\begin{document}
\sloppy
\maketitle

\begin{abstract}
We consider some models in the Kardar-Parisi-Zhang universality class, namely the polynuclear growth model and the totally/partially asymmetric simple exclusion process. For these models, in the limit of large time $t$, universality of fluctuations has been previously obtained. In this paper we consider the convergence to the limiting distributions and determine the (non-universal) first order corrections, which turn out to be a non-random shift of order $t^{-1/3}$ (of order $1$ in microscopic units). Subtracting this deterministic correction, the convergence is then of order $t^{-2/3}$. We also determine the strength of asymmetry in the exclusion process for which the shift is zero. Finally, we discuss to what extend the discreteness of the model has an effect on the fitting functions.
\end{abstract}

\section{Introduction and results}\label{SectIntro}
A growth model is a stochastic evolution for a height function $h_t(x)$, $x$ space, $t$ time. In this paper we consider some models in the Kardar-Parisi-Zhang (KPZ) universality class~\cite{KPZ86} in $1+1$ dimensions, that are irreversible and have local growth rules. Moreover, there is a smoothing mechanism ensuring the existence of a deterministic limit shape \mbox{$h_{\rm ma}(\xi):=\lim_{t\to\infty} t^{-1} h_t(\xi t)$}, around which fluctuations are expected to have some degree of universality. More precisely, the height function under the scaling
\begin{equation}\label{eqIntro1}
h^{\rm resc}_t(u):=\frac{h_t(\xi t+u t^{2/3})-t\, h_{\rm ma}(\xi+u t^{-1/3})}{t^{1/3}}
\end{equation}
should be a well-defined stochastic process as $t\to\infty$, which means that fluctuations are of order $t^{1/3}$ and the spatial correlation length scales as $t^{2/3}$ (below we will focus on the one-point distribution only). The limit process still depends on initial conditions and is expected to be universal inside subclasses of the KPZ class (e.g., the processes for flat or curved limit shape are different). For more details, see the recent reviews~\cite{Fer10b,FS10} and the references therein.

\medskip
\emph{Theory.} On the theoretical side, some solvable models in the KPZ class have been analyzed in great detail. Two of the best studied models are the polynuclear growth (PNG) model and the (totally/partially) asymmetric simple exclusion process (TASEP/PASEP), see below for a concise definition. For example, it was shown~\cite{Jo00b,PS00,BR00,TW08b} that if the initial conditions generate a curved limit shape, then the limiting distribution function of $F_t=\Pb(h^{\rm resc}_t(0)\leq s)$ as $t\to\infty$ is given by the GUE Tracy-Widom distribution function, $F_{\rm GUE}$, occurring first in random matrix theory in~\cite{TW94}. Recently, the first model in the KPZ class, namely the KPZ equation itself, it was solved and the $F_{\rm GUE}$ was obtained~\cite{SS10,SS10a,SS10b,ACQ10}\footnote{These works uses the approach of Bertini and Giacomin for the weakly asymmetric simple exclusion process (WASEP)~\cite{BG97}; a replica approach is in~\cite{CDR10,Dot10,SP11}; see the review~\cite{SS10c} for details.}.

This solution of the KPZ equation includes the distribution function for $h_t$ at any time $t$, not only in the $t\to\infty$ limit. A further aspect noticed in~\cite{SS10b} is that the correction of $F_t$ from $F_{\rm GUE}$ is of order $t^{-1/3}$. This means that on the original scale, the difference between the height function $h_t(\xi t)$ and $t\, h_{\rm ma}(\xi)$ is of order $1$.

\medskip
\emph{Experiments.} Until recently, on the experimental side there were only few experiments giving the fluctuations exponent $1/3$~\cite{WIMM97,MMK97}. Besides the difficulties of having good statistics, one of the main issues in the experimental set-up is to have \emph{really a local} dynamic, and the centering in~(\ref{eqIntro1}) has to be obtained experimentally from the measured asymptotic growth velocity. In any case, experimental data were not good enough to have more detailed information on the scaling exponents, until the recent amazing experiments carried out by Takeuchi (see~\cite{TS10} and~\cite{TSSS11}). Using nematic liquid crystals they were able to get accurate statistics that confirmed the fluctuation and correlation exponents, but also the limiting distribution functions and the covariance of the processes\footnote{The processes are: (a) for non-random initial condition and flat limit shape: the Airy$_1$ process~\cite{Sas05,BFPS06}, (b) for curved limit shape: the Airy$_2$~\cite{PS02} process.} previously obtained in solvable models.

As it was the case for the solution of the KPZ equation cited above, also in these experiments one could see that the fit between the density of the Tracy-Widom distributions and the measurements is good even at relatively small time $t$, but finite size corrections are still visible. They measured also the decay of the mean, variance, skewness and kurtosis. In the scaled variables, the mean has been seen to decay as $t^{-1/3}$, while the others as $t^{-2/3}$. Thus, in the unrescaled variables, the mean has a shift of order $1$.

\medskip
Before explaining the results, we briefly describe the models analyzed in this paper\footnote{Java animations of these models can be found for the PNG model at \url{http://www-wt.iam.uni-bonn.de/~ferrari/animations/AnimationRSK.html}  and at
\url{http://www-wt.iam.uni-bonn.de/~ferrari/animations/ContinuousTASEP.html} for TASEP.}.

\emph{(a) PASEP and TASEP.} The partially asymmetric simple exclusion process (PASEP) on $\Z$ in continuous time is an interacting particles system. At any time instant $t$ at most one particle can occupy a site in $\Z$ and particles try to jump to the right neighboring site with rate $p$ and to the left neighboring site with rate $q=1-p$. The jumps are made only if the arrival sites are free. This dynamics does not change the order of particles. We label them from right to left so that $x_k(t)$ denotes the position of particle with label $k$ at time $t$ and $x_k(t)>x_{k+1}(t)$ for all $k$ and $t$. The two initial conditions analyzed in this paper are \emph{step initial condition}, $x_k(0)=-k$, $k=1,2,\ldots$, and \emph{alternating initial condition}, $x_k(0)=-2k$, $k\in\Z$.
When $q=0$ (and $p=1$) the particles can only jump to the right so that the model is totally asymmetric and is called TASEP.

\emph{(b) PNG model.} The polynuclear growth model describes an interface given by a height function $x\mapsto h_t(x)\in\Z$, $x\in \R$, which is a step function with up- and down-steps on the integers and constant in between. The PNG dynamics is the following. The up-steps move to the left with unit speed and the down-steps to the right with unit speed. On top of it, there are random nucleation events, i.e., creation of a pair of up- and down-steps, which then follow the deterministic spread to the left/right. In this paper we consider $h_0(x)=0$ for all $x\in\R$ and nucleations as follows: for \emph{flat PNG}, nucleations form a Poisson point process in $\R\times \R_+$ with intensity $2$, while for \emph{PNG droplet} the nucleations are further restricted to $\{(x,t), |x|\leq t\}$ (see also the review~\cite{FP05} for more details and illustrations).

\medskip
\emph{Results.}
A difference between the shift of the mean in the solution of the KPZ equation and in the liquid crystal experiment is that they have opposite signs. The latter has the same sign as for the TASEP (totally asymmetric) as we will show. Since the solution of the KPZ equation can be obtained starting from the WASEP (weakly asymmetric), there will be a value of the asymmetry for which the mean has no shift (up to $\Or(t^{-2/3})$). A preliminary Monte-Carlo simulation indicates that this happens for the PASEP height function at the origin for $p=p_c\simeq 0.78$~\cite{SS10b}.
\begin{result}\label{claim0}
In Corollary~\ref{CorShiftPASEP} we show that $p_c$ is the solution of
\begin{equation}
\sum_{\ell=1}^\infty \frac{(1-p_c)^\ell}{p_c^\ell-(1-p_c)^\ell}=\frac12\quad\iff \quad p_c=0.78227\,87862{\ldots}
\end{equation}
\end{result}
We first determine an analytic formula for the shift of the distribution of a \emph{tagged particle} (see Proposition~\ref{PropPASEP}). The shift turns out to be a function of the macroscopic particle number. However, when we switch back to the height function representation, the shift becomes again independent of the macroscopic position (the $\xi$ in \eqref{eqIntro1}).

The other results concern the first order correction to the limiting distribution function and density. Let us illustrate it for the PNG droplet. The other cases are analogue, but instead of the GUE one has for example the GOE Tracy-Widom distribution. The height function is an integer-valued function, i.e.\ $h_t(0)\in\Z$, and since we always look at it it at a single point, $x=0$, we drop the space dependence in our notation and write $h_t \equiv h_t(0)$. It is known that for some explicit constants $c_1,c_2$, the rescaled variable
\begin{equation}\label{eq2b}
\widetilde h_{t,\rm resc} := \frac{h_t-c_1 t}{c_2 t^{1/3}} \overset{{\cal D}}{\longrightarrow} \zeta\quad \textrm{as }t\to\infty,
\end{equation}
where $\zeta$ is a random variable with (GUE) Tracy-Widom distribution $F_{\rm GUE}$, i.e.,
\begin{equation}
\widetilde F_t(s):=\Pb(\widetilde h_{t,\rm resc}\leq s) \to F_{\rm GUE}(s)\quad \textrm{as }t\to\infty.
\end{equation}
Remark that $\widetilde F_t$ is piecewise constant over intervals of length $\delta_t:=1/(c_2 t^{1/3})$.
One expects that
\begin{equation}\label{eq7b}
\widetilde h_{t,\rm resc} = \zeta + \eta \, \delta_t + \Or(\delta_t^{2})\quad\textrm{on }\widetilde I_t:=(\Z-c_1 t) \delta_t.
\end{equation}
where $\eta$ is another random variable a priori not independent from $\zeta$. With (\ref{eq7b}) we mean that
\begin{equation}
\Pb(\widetilde h_{t,\rm resc}\leq s)=\Pb(\zeta + \eta \, \delta_t + \Or(\delta_t^{2})\leq s)
\end{equation}
for $s\in \widetilde I_t$.

But what is the nature of $\eta$? The surprising result is that for the models we consider, $\eta$ is a \emph{deterministic constant} and therefore \emph{independent} from $\zeta$ (see Section~\ref{SectPNGTASEP} for PNG and TASEP, Section~\ref{SectPASEP} for PASEP). This implies the following.
\begin{result}\label{claim1}
Let us denote by $\delta_t:=c_2^{-1} t^{-1/3}$ the discrete lattice width where $\widetilde h_{t,\rm resc}$ lives.
There exists a constant $\eta$ such that
\begin{equation}\label{eq8b}
\widetilde F_t(s)=\Pb\bigl(\widetilde h_{t,\rm resc}\leq s\bigr)= F_{\rm GUE}(s-\eta \, \delta_t) + \Or(\delta_t^2)
\end{equation}
for all $s\in \widetilde I_t=(\Z-c_1 t)\delta_t$.
\end{result}
For PNG and TASEP the shift does not depend on the chosen macroscopic position, but this property is not generic and might depend on the chosen observable too, as shown by the result on PASEP. (This non-universality of $\eta$ is quite intuitive, since $\eta$ is a correction term on the microscopic scale, thus model-dependent.) Consequently, by shifting the height function $h_t$ by the constant $\eta$ as in (\ref{eq8b}), the convergence of the distribution function to $F_{\rm GUE}$ is of order $\Or(t^{-2/3})$. If $\eta$ was not independent from $\zeta$, then one would have a convergence only of order $\Or(t^{-1/3})$ instead.

In the domain of random matrices, similar results have been obtained for the Gaussian and Laguerre Unitary Ensembles~\cite{Cho06,Kar06}. However, in those cases the analyzed random variable were continuous. This differs from the random variables of the models considered here, since before rescaling they live on $\Z$, while after the rescaling (\ref{eqIntro1}) they still live on a discrete lattice of width $\delta_t$. The discreteness becomes irrelevant for the universal statements, but at first order it can not be neglected for the fit with the limiting distribution function and density. Indeed, the shift needed to have a fit with accuracy of order $\Or(t^{-2/3})$ is not the same for the density as for distribution function. In order to see this feature, consider the slightly modified scaling of the height function
\begin{equation}\label{eq11}
h_{t,\rm resc}:=(h_t-c_1 t-a)\delta_t
\end{equation}
where $a\in\R$ is a given constant. Further, set
\begin{equation}
F_t(s):=\Pb(h_{t,\rm resc}\leq s),
\end{equation}
and
\begin{equation}
p_t(s):=\delta_t^{-1}(F_t(s)-F_t(s-\delta_t)).
\end{equation}
\begin{result}\label{claim2}
With the choice $a:=\eta+\tfrac12$ we have
\begin{equation}
p_t(s)=F'_{\rm GUE}(s)+\Or(\delta_t^2).
\end{equation}
for all $s\in I_t:=(\Z-c_1 t-a)\delta_t$.
\end{result}
These results are discussed in Section~\ref{SectDiscreteness} and used for the fits of the simulations of TASEP in Section~\ref{SectPNGTASEP}.

\begin{remark}
Result~\ref{claim2} is generic and does not depend on the fact that our distributions is expressed by a Fredholm determinant, but is a consequence of the $\Or(\delta_t^2)$ error for the centered discrete derivative.
\end{remark}

\begin{remark}
With the scaling (\ref{eq11}), Result~\ref{claim1} writes
\begin{equation} \label{eq9}
F_t(s)=F_{\rm GUE}(s+\tfrac12 \delta_t)+\Or(\delta_t^2), \quad s \in I_t,
\end{equation}
while the scaling (\ref{eq2b}) yields
\begin{equation}
\widetilde F_t(s)=F_{\rm GUE}(s+\tfrac12 \delta_t-a\delta_t)+\Or(\delta_t^2), \quad s \in \widetilde I_t, \label{eq12}
\end{equation}
and
\begin{equation}
\widetilde p_t(s)=F'_{\rm GUE}(s-a\delta_t)+\Or(\delta_t^2), \quad s \in \widetilde I_t, \label{eq13}
\end{equation}
where $\widetilde p_t(s):=\delta_t^{-1}(\widetilde F_t(s)-\widetilde F_t(s-\delta_t))$.
\end{remark}

In view of these results, we carried out a simulation for TASEP with time $t=1000$. As observable we used a tagged particle. The dots in Figure~\ref{FigStepDensity} represent $s\mapsto p_t(s)$ for $s \in I_t$ and $a=\eta+\tfrac12$, which is well approximated by the solid line $s \mapsto F_{\rm GUE}'(s)$ as predicted by Result~\ref{claim2}. As comparison, the dashed line is the unshifted density $s \mapsto F_{\rm GUE}'(s-a\delta_t)$ (see (\ref{eq13})), i.e., the fit obtained with $a=0$.

The same applies to the distribution function. The dots in Figure~\ref{FigStepCDF} are the plot of $s \mapsto F_t(s)$ for $s \in I_t$ and $a=\eta+\tfrac12$. The dashed line is the predicted limiting distribution function with scaling (\ref{eq2b}), be $s \mapsto F_{\rm GUE}(s) = F_{\rm GUE}(s+\tfrac12\delta_t - a\delta_t)$ (see (\ref{eq12})). The fit suggested by Result~\ref{claim1} is the solid line, $s \mapsto F_{\rm GUE}(s+\tfrac12\delta_t)$, which indeed is a better fit.

In the same way we fit Figures~\ref{FigFlatDensity} and~\ref{FigFlatCDF} with the difference that the limiting distribution function is $s\mapsto F_{\rm GOE}(2s)$.

\medskip

Finally, the shift used in $h_{t,\rm resc}$ is the same needed to have a convergence of the moments, and consequently of the variance, skewness, kurtosis of order $\Or(t^{-2/3})$. The following result is discussed in Section~\ref{sectmoments}.
\begin{result}\label{claim3}
We have
\begin{equation}
\E(h_{t,\rm resc}^m)=\E(\zeta^m)+\Or(\delta_t^2)
\end{equation}
for all $m \in \N$.
\end{result}
Remark that if $\eta$ was not independent from $\zeta$, the convergence of the variance, skewness, and kurtosis would still be of order $\Or(t^{-1/3})$.

\subsubsection*{Acknowledgments}
The authors wish to thank Herbert Spohn and Tomohiro Sasamoto for discussions about this problem and Jinho Baik concerning random matrix results. Further, this work is supported by the German Research Foundation via the SFB611-A12 project.

\section{Strategy and effects of the discreteness}\label{SectDiscreteness}
In this section we present the strategy used to get the results. We discuss the effects of the intrinsic discreteness of the models on the fitting functions and on the moments, since it is relevant at first order. Finally, we explain how to fit data coming from an experiment.

\subsection{On the fitting functions (Results~\ref{claim1} and~\ref{claim2})}
For the PNG model and the TASEP, the strategy of getting (\ref{eq2}) is the following\footnote{Mathematically, we get a weaker result, but to illustrate what really happens let us assume that one has (\ref{eq2}). What is missing are explicit bounds on the decay of the kernels, which can be obtained by standard asymptotic analysis; the ingredients like the steep descent paths are all already contained in previous papers. For TASEP we illustrate the results with a simulation for time $t=1000$.}. In these cases, the distribution function of $h_t$ can be expressed as a (discrete) Fredholm determinant with kernel $K_t$,
\begin{equation}
\Pb(h_t\leq x)=\det(\Id-K_t)_{\ell^2(\{x+1,x+2,\ldots\})}, \quad x\in \Z.
\end{equation}
For some constant $a \in \R$, the rescaled random variable
\begin{equation}\label{eq:1}
h_{t,\rm resc} := (h_t-c_1 t -a) \delta_t\quad\textrm{with }\delta_t=c_2^{-1}t^{-1/3}
\end{equation}
lives on $I_t:=(\Z-c_1t-a)\delta_t$. According to the scaling in (\ref{eq:1}), we define the rescaled kernel $K_{t,\rm resc}$ as
\begin{equation}\label{eq1}
K_{t,\rm resc}(s_1,s_2):= \delta_t^{-1} K_t(c_1 t+a+s_1 \delta_t^{-1},c_1 t+a+s_2 \delta_t^{-1})
\end{equation}
so that the distribution function $F_t$ defined by
\begin{equation}
F_t(s):=\Pb(h_t\leq c_1 t+s \delta_t^{-1} +a), \quad s \in \R,
\end{equation}
can be written as a Fredholm determinant on $\ell^2([s+\delta_t,\infty)\cap I_t,\delta_t\nu)$ with $\nu$ the point measure on $I_t$,
\begin{equation}\label{eq7}
\begin{aligned}
F_t(s)&=\det(\Id-K_{t,\rm resc})_{\ell^2([s+\delta_t,\infty)\cap I_t,\delta_t\nu)} \\
& = \sum_{n=0}^\infty \frac{(-1)^n}{n!} \sum_{x_1,\ldots,x_n\in J_t} \delta_t^n \det(K_{t, \rm resc}(x_i,x_j))_{1\leq i,j\leq n}.
\end{aligned}
\end{equation}
Note that $F_t$ and the Fredholm determinant in (\ref{eq7}) are piecewise constant functions, with jumps for values of $s$ in the lattice $I_t$. The next step is to show that for $s_1,s_2\in I_t$ and a well-chosen $a\in\R$,
\begin{equation}\label{eq3.5}
K_{t,\rm resc}(s_1,s_2)=K(s_1,s_2)+\delta_t K_{\rm asym}(s_1,s_2)+ \Or(\delta_t^2)
\end{equation}
where $K$ is a symmetric and $K_{\rm asym}$ an antisymmetric kernel. Then, it follows that\footnote{On a rigorous level, one needs to verify that (a) the $\Or(\delta_t^2)$ in (\ref{eq3.5}) is an operator with with $1$-norm of order $\Or(\delta_t^2)$ and (b) $(\Id-\chi_s K \chi_s)^{-1}\chi_s K_{\rm asym}\chi_s$ is trace-class.}
\begin{equation}
\begin{aligned}
F_t(s)&=\det(\Id-K_{t,\rm resc})_{\ell^2([s+\delta_t,\infty)\cap I_t,\delta_t\nu)}\\
&=\det(\Id-K)_{\ell^2([s+\delta_t,\infty)\cap I_t,\delta_t\nu)}\\
&\quad\times \left(1+\delta_t\Tr((\Id-\chi_sK\chi_s)^{-1})\chi_sK_{\rm asym}\chi_s)+ \Or(\delta_t^2) \right)
\end{aligned}
\end{equation}
where $\chi_s$ is the projection onto $[s+\delta_t,\infty)\cap I_t$. The operator under the trace is antisymmetric, therefore its trace is zero and
\begin{equation}\label{eq2}
F_t(s)=\det(\Id-K)_{\ell^2([s+\delta_t,\infty)\cap I_t,\delta_t\nu)}\left(1+\Or(\delta_t^{2}) \right), \quad s \in \R.
\end{equation}
If we denote by
\begin{equation}\label{eq4}
F(s):=\det(\Id-K)_{L^2((s,\infty))}
\end{equation}
the limiting distribution of $F_t(s)$ taken as a Fredholm on $L^2((s,\infty))$, then by Lemma~\ref{LemDiscrVsCont} below, we get
\begin{equation}
F_t(s)=F(s+\tfrac{1}{2}\delta_t)+\Or(\delta_t^2),
\end{equation}
for $s\in I_t$, and by the argument below (that gives Result~\ref{claim2}), one finally obtains
\begin{equation}
\quad p_t(s)=F'(s)+\Or(\delta_t^2).
\end{equation}
In Section~\ref{SectPNGTASEP} we derive (\ref{eq3.5}) for the PNG model and the TASEP.

Let us explain how to get Result~\ref{claim2} without the need of Fredholm determinant representations.
Assume that there exists a constant $\gamma$ such that
\begin{equation}
F_t(s)=F(s+\gamma \delta_t) + \delta_t^2 Q(s)+\Or(\delta_t^3),\quad s\in I_t,
\end{equation}
with $F\in C^2$ and $Q\in C^1$. Then using Taylor expansion we readily obtain
\begin{equation}
p_t(s)=F'(s) + \frac{\delta_t}{2}(\gamma^2-(1-\gamma)^2) F''(s)+\Or(\delta_t^2),\quad s\in I_t.
\end{equation}
Therefore, if $\gamma=1/2$, then $p_t(s)-F'(s)=\Or(\delta_t^2)$ for $s\in I_t$, while the approximation would be only of order $\delta_t$ if $\gamma \neq 1/2$.

In our case, see Corollary~\ref{CorDistrVSdensity}, we have $\gamma=1/2$ which is a consequence of the following lemma.
\begin{lem}\label{LemDiscrVsCont}
Assume that the kernel $K$ satisfies\footnote{With $\partial_i$ we mean the derivative with respect to the $i$th entry of the function. The assumption (\ref{eq6}) holds for the Airy kernels, see Lemma~\ref{LemAiryKernels}.}
\begin{equation} \label{eq6}
\max\{\lvert K(x_1,x_2)\rvert, \lvert \partial_i K(x_1,x_2) \rvert, \lvert \partial_i\partial_j K(x_1,x_2) \rvert\} \leq C e^{-c(x_1+x_2)}
\end{equation}
for some constants $C,c>0$, for all $x_1,x_2\in (s,\infty)$ and $i,j \in \{1,2\}$. Let $\delta_t$ be as above the lattice width. Then\footnote{For the Airy kernels it is easy to improve $\Or(t^{-2/3} e^{-s})$ to $\Or(t^{-2/3} e^{-\max\{s,0\}})$. However, getting a rigorous \emph{good} bound for the error as $s\to-\infty$ is a much more difficult task (this would be needed for a rigorous proof of the convergence of the moments).}
\begin{equation}\label{eq3}
\left|\det(\Id-K)_{L^2((s+\delta_t/2,\infty))}-\det(\Id-K)_{\ell^2([s+\delta_t,\infty)\cap I_t,\delta_t\nu)}\right| =  \Or(\delta_t^2 e^{-cs}).
\end{equation}
\end{lem}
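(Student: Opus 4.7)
The plan is to recognise the right-hand sum in \eqref{eq3} as the composite midpoint rule for the left-hand integral, and then to estimate the error term-by-term in the Fredholm expansion. The crucial geometric observation is that, because $s \in I_t$, each lattice point $s + k\delta_t$ (for $k\geq 1$) is exactly the midpoint of the subinterval $[s+(k-\tfrac12)\delta_t, s+(k+\tfrac12)\delta_t]$. Consequently, for any smooth $g$,
\begin{equation*}
\sum_{x \in [s+\delta_t,\infty)\cap I_t}\!\!\! \delta_t\, g(x)
\;=\;\int_{s+\delta_t/2}^\infty g(x)\,\dx x \;-\; \frac{\delta_t^2}{24}\int_{s+\delta_t/2}^\infty g''(x)\,\dx x\;+\;\cdots,
\end{equation*}
so the continuous/discrete mismatch is the usual $\Or(\delta_t^2)$ midpoint-rule error, controlled by the $L^1$ norm of $g''$.

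Next I would expand both Fredholm determinants as
\begin{equation*}
\det(\Id - K) \;=\; \sum_{n=0}^\infty \frac{(-1)^n}{n!}\, T_n,
\end{equation*}
where $T_n^{\rm cont}$ is an $n$-fold integral and $T_n^{\rm disc}$ is an $n$-fold weighted sum of $f_n(x_1,\ldots,x_n):=\det(K(x_i,x_j))_{1\leq i,j\leq n}$. To estimate $T_n^{\rm cont}-T_n^{\rm disc}$ I would telescope variable-by-variable, replacing one continuous integration by the midpoint sum at each step, so that each of the $n$ telescope terms carries one one-dimensional midpoint-rule error of size $\delta_t^2 \int |\partial_{x_k}^2 f_n|$ while the remaining operators are bounded by integrals (plus harmless $\Or(\delta_t)$ boundary contributions from the sums).

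To execute this I need $C^2$ control of $f_n$ in each variable, together with good decay. Since differentiating $\det(K(x_i,x_j))$ in a single variable $x_k$ only differentiates the $k$-th row of the underlying matrix, the Hadamard-type inequality combined with the bounds~(\ref{eq6}) yields
\begin{equation*}
\max\bigl\{|f_n|,\,|\partial_{x_k}^2 f_n|\bigr\} \;\leq\; C^n n^{n/2} e^{-c(x_1+\cdots+x_n)}
\end{equation*}
with a uniform $C$. Integrating $e^{-c\sum x_i}$ over $(s,\infty)^n$ produces a factor $c^{-n} e^{-cns}$, and the prefactor $C^n n^{n/2}/(c^n n!)$ is summable. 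Putting it all together,
\begin{equation*}
\left|\,T_n^{\rm cont}-T_n^{\rm disc}\,\right| \;\leq\; n\cdot C^n n^{n/2}\,\delta_t^2\, c^{-n}\, e^{-cns},
\end{equation*}
whose alternating-sign sum over $n$ is $\Or(\delta_t^2 e^{-cs})$, yielding \eqref{eq3}.

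The main obstacle I anticipate is bookkeeping in the telescoping step: one must bound the intermediate mixed objects in which some variables are integrated, some are summed, and one is differentiated twice, all while keeping the exponential decay factor intact so that the Fredholm series of the error converges. Additionally, one must verify that the $\Or(\delta_t)$ boundary remainders coming from summation-versus-integration at the left endpoint $s+\delta_t/2$ contribute only at order $\delta_t^2$ overall; this works precisely because of the midpoint alignment highlighted in the first paragraph, which forces the linear-in-$\delta_t$ boundary terms to cancel.
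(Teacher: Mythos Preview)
Your approach is correct and coincides with the paper's in spirit: both expand the two Fredholm determinants term by term and estimate each $n$-fold discrete--continuous discrepancy via a midpoint-rule error controlled by second derivatives of $f_n(x_1,\ldots,x_n)=\det(K(x_i,x_j))$, which are in turn bounded by a Hadamard argument together with the exponential decay~\eqref{eq6}. The difference is purely organisational. The paper packages the comparison into a single $n$-dimensional midpoint lemma (Lemma~\ref{LemmaSumIntNdim}): one writes the integral over $(-\delta_t/2,\infty)^n$ as a sum of integrals over cubes centred at the lattice points, Taylor-expands $f_n$ simultaneously in all $n$ variables, observes that the linear terms vanish by symmetry, and is left with an error governed by \emph{all} second partials $\partial_j\partial_k f_n$. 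This bypasses your telescoping step entirely and avoids having to track the mixed objects where some variables are summed and some integrated. Your route needs only the pure second partials $\partial_{x_k}^2 f_n$ at the cost of that extra bookkeeping; either way the final estimate and its summability in $n$ are the same.

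One small slip worth correcting: differentiating $\det(K(x_i,x_j))$ in $x_k$ does not touch ``only the $k$-th row''. The variable $x_k$ sits in both the $k$-th row (as first argument of $K$) and the $k$-th column (as second argument), so $\partial_{x_k}f_n=\det K_{k,}+\det K_{,k}$, and $\partial_{x_k}^2 f_n$ produces up to four determinants with one row and/or column differentiated. This does not affect your bound---Hadamard still gives the factor $C^n n^{n/2}\prod_k e^{-2cx_k}$ (this is exactly the content of Lemma~\ref{lemBoundKgeneral} in the paper)---but the justification should be stated accurately.
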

\begin{proof}
Let us set $J_t:=I_t\cap[s+\delta_t,\infty)$. Then, we have
\begin{equation}
\det(\Id-K)_{\ell^2(J_t,\delta_t\nu)} = \sum_{n=0}^\infty \frac{(-1)^n}{n!}  \sum_{x_1,\ldots,x_n\in J_t} \delta_t^n \det(K(x_i,x_j))_{1\leq i,j\leq n}
\end{equation}
and
\begin{equation}
\det(\Id-K)_{L^2((s+\delta_t/2,\infty))} = \sum_{n=0}^\infty \frac{(-1)^n}{n!} \int\limits_{(s+\delta_t/2,\infty)^n}\dx^n x\, \det(K(x_i,x_j))_{1\leq i,j\leq n}.
\end{equation}
Equation \eqref{eq3} then follows from Lemma~\ref{LemmaSumIntNdim} with
\begin{equation}
f(x_1,\ldots,x_n):=\det(K(s+\delta_t+x_i,s+\delta_t+x_j))_{1\leq i,j\leq n}
\end{equation}
together with Lemma~\ref{lemBoundKgeneral}.
\end{proof}
A straightforward corollary is the following.
\begin{cor}\label{CorDistrVSdensity}
Assume \eqref{eq2} and \eqref{eq6} to hold. Then, for large $t$, we have (remember that $\delta_t=c_2^{-1}t^{-1/3}$)
\begin{equation}\label{eq10}
F_t(s)=F(s+\tfrac12 \delta_t)+\Or(\delta_t^2)
\end{equation}
for $s\in I_t$.
\end{cor}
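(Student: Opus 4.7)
The plan is to chain the two ingredients together. From \eqref{eq2}, the rescaled distribution function satisfies
\begin{equation*}
F_t(s)=\det(\Id-K)_{\ell^2([s+\delta_t,\infty)\cap I_t,\delta_t\nu)}\bigl(1+\Or(\delta_t^2)\bigr), \qquad s\in I_t,
\end{equation*}
so the whole task reduces to replacing the discrete Fredholm determinant on the right-hand side by a continuous one at the correctly shifted argument. This replacement is exactly the content of Lemma~\ref{LemDiscrVsCont}: taking its $s$ to be our $s$, the hypothesis \eqref{eq6} yields
\begin{equation*}
\det(\Id-K)_{\ell^2([s+\delta_t,\infty)\cap I_t,\delta_t\nu)}=\det(\Id-K)_{L^2((s+\delta_t/2,\infty))}+\Or(\delta_t^2 e^{-cs}).
\end{equation*}
By the definition \eqref{eq4} of $F$, the continuous determinant equals $F(s+\tfrac12\delta_t)$, which gives the shift by half a lattice spacing claimed in the corollary — this is where the factor $\gamma=1/2$ mentioned in the discussion comes from.

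First I would substitute this identity into the expression for $F_t(s)$ and expand, getting
\begin{equation*}
F_t(s)=\bigl(F(s+\tfrac12\delta_t)+\Or(\delta_t^2 e^{-cs})\bigr)\bigl(1+\Or(\delta_t^2)\bigr).
\end{equation*}
Since $F$ is a distribution function, $0\leq F(s+\tfrac12\delta_t)\leq 1$, so the multiplicative $\Or(\delta_t^2)$ error becomes additive of the same order, and $e^{-cs}$ is bounded on any set bounded below. Multiplying out and collecting the error terms gives \eqref{eq10}.

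The only slightly delicate point is uniformity of the error in $s$, since Lemma~\ref{LemDiscrVsCont} produces an error of size $\Or(\delta_t^2 e^{-cs})$ that blows up as $s\to-\infty$ (the footnote to that lemma explicitly flags the left tail as the harder regime). For the statement as written, however, one only needs an $\Or(\delta_t^2)$ bound at each fixed $s\in I_t$ (or uniformly on compacta, or on a half-line bounded below), which both factors deliver directly. The main obstacle, if one wanted to upgrade this to a genuinely uniform-in-$s$ statement good enough to integrate against polynomial weights (as will be required in the proof of Result~\ref{claim3} on moment convergence), is precisely a sharper control of the left tail in Lemma~\ref{LemDiscrVsCont}; that control is not needed here, so the present corollary follows immediately by the algebraic manipulation above.
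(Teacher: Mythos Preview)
Your argument is correct and is exactly the straightforward chaining of \eqref{eq2} with Lemma~\ref{LemDiscrVsCont} that the paper has in mind; the paper in fact gives no explicit proof, simply declaring the statement ``a straightforward corollary'' of those two ingredients. Your remarks on the (lack of) uniformity as $s\to-\infty$ are accurate and match the caveat in the footnote to Lemma~\ref{LemDiscrVsCont}.
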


\begin{remark}\label{RemarkNoShift}
An equivalent way would be to consider the scaling (\ref{eq1}) without the shift by $a$, i.e.,
\begin{equation}\label{eq1b}
\widetilde K_{t,\rm resc}(s_1,s_2):= \delta_t^{-1} K_t(c_1 t+s_1 \delta_t^{-1},c_1 t+s_2 \delta_t^{-1})=K_{t,\rm resc}(s_1-a\delta_t,s_2-a\delta_t)
\end{equation}
Then, instead of (\ref{eq3.5}) we would have obtained
\begin{equation}
\begin{aligned}
\widetilde K_{t,\rm resc}(s_1,s_2) =&K(s_1,s_2)-a\delta_t(\partial_1 K(s_1,s_2)+\partial_2 K(s_1,s_2))\\
&+\delta_t \widetilde K_{\rm asym}(s_1,s_2)+ \Or(\delta_t^2).
\end{aligned}
\end{equation}
In the specific case of the Airy kernels  $K_{{\cal A}_2}(x,y):=\int_{\R_+}\dx \lambda \, \Ai(x+\lambda)\Ai(y+\lambda)$ and $K_{{\cal A}_1}(x,y):=\Ai(x+y)$,
\begin{equation}
\partial_1 K_{{\cal A}_2}(s_1,s_2)+\partial_2 K_{{\cal A}_2}(s_1,s_2) =-\Ai(s_1)\Ai(s_2),
\end{equation}
and
\begin{equation}
\partial_1 K_{{\cal A}_1}(s_1,s_2)+\partial_2 K_{{\cal A}_1}(s_1,s_2) = 2{\Ai}'(s_1+s_2).
\end{equation}
\end{remark}

\subsection{On the moments (Result~\ref{claim3})}\label{sectmoments}
Another consequence of the constant shift by $a$ is that all finite moments of $F_t$ converge as fast as $t^{-2/3}$. Without the shift, the first moment would converge only as fast as $t^{-1/3}$, while the variance, skewness, kurtosis would of course not be affected by the shift.
\begin{lem}\label{LemMoments}
Assume that\footnote{Note that this condition is stronger than (\ref{eq10}) and in general not so easy to obtain rigorously.} $F_t(s)=F(s+\tfrac{\delta_t}{2})+\Or(\delta_t^2)G_t(s)$ for $s\in I_t$ such that $F$ has finite $m$th moment (with $F''\in L^1\cap C^0$) and $G_t$ satisfying $\int_{\R}\dx s\, |s|^m |G_t(s)|<\infty$ uniformly in $t$. Then,
\begin{equation}
\int_{\R} s^m \, \dx F_t(s)=\int_{\R} s^m \, \dx F(s) + \Or(\delta_t^2)
\end{equation}
for all $m\in\N$.
\end{lem}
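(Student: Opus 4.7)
The plan is to expand the discrete moment using the hypothesis and compare it to the continuous moment. Writing $\int s^m \dx F_t(s) = \sum_{s \in I_t} s^m [F_t(s) - F_t(s - \delta_t)]$ and substituting $F_t(s) = F(s+\delta_t/2) + \delta_t^2 R_t(s)$ with $|R_t(s)| \leq C|G_t(s)|$, the target splits into a main part $A_t := \sum_{s \in I_t} s^m [F(s+\delta_t/2) - F(s-\delta_t/2)]$ and a remainder $B_t := \delta_t^2 \sum_{s \in I_t} s^m [R_t(s) - R_t(s-\delta_t)]$. It then suffices to show $A_t = \int u^m F'(u)\,\dx u + \Or(\delta_t^2)$ and $B_t = \Or(\delta_t^2)$.

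For $A_t$, I would use $F(s+\delta_t/2) - F(s-\delta_t/2) = \int_{s-\delta_t/2}^{s+\delta_t/2} F'(u)\,\dx u$ and the fact that the cells $[s-\delta_t/2,s+\delta_t/2)$, $s\in I_t$, tile $\R$, so that $A_t = \int_\R \sigma(u)^m F'(u)\,\dx u$ where $\sigma(u)$ denotes the midpoint of the cell containing $u$. Within each cell I would Taylor-expand $\sigma(u)^m - u^m = -m s^{m-1}(u-s) + \Or((u-s)^2(|s|+1)^{m-2})$ and write $F'(u) = F'(s) + \int_s^u F''(w)\,\dx w$. The symmetry $\int_{s-\delta_t/2}^{s+\delta_t/2}(u-s)\,\dx u = 0$ kills the leading odd-symmetric contribution once $F'$ is frozen at $F'(s)$, leaving a per-cell residue of order $\Or(\delta_t^3[|s|^{m-1}\max_{\text{cell}}|F''| + (|s|+1)^{m-2}\max_{\text{cell}}|F'|])$. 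Summing over $s\in I_t$ and recognising a Riemann sum yields $A_t - \int u^m F'(u)\,\dx u = \Or(\delta_t^2)$ by the finite-moment hypothesis on $F$ together with $F''\in L^1\cap C^0$.

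For $B_t$, I would apply summation by parts (Abel's transformation), with boundary terms vanishing from the decay of $R_t$, to obtain $\sum_{s \in I_t} s^m[R_t(s) - R_t(s-\delta_t)] = -\sum_{s\in I_t} [(s+\delta_t)^m - s^m] R_t(s)$. Using $(s+\delta_t)^m - s^m = m\delta_t s^{m-1} + \Or(\delta_t^2(|s|+1)^{m-2})$ together with $|R_t| \leq C\delta_t^2 |G_t|$ this gives
\begin{equation*}
|B_t| \leq Cm\delta_t^3 \sum_{s\in I_t} |s|^{m-1}|G_t(s)| + \Or(\delta_t^4) \sum_{s\in I_t} (|s|+1)^{m-2}|G_t(s)|.
\end{equation*}
Reading $\delta_t \sum_{s \in I_t}|s|^k|G_t(s)|$ as a Riemann sum for $\int|s|^k|G_t(s)|\,\dx s$, which is uniformly bounded in $t$ for $k\leq m$ (splitting $|s|<1$ and $|s|\geq 1$ and using $|s|^k\leq 1+|s|^m$), I would conclude $B_t = \Or(\delta_t^2)$.

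The hard part will be extracting the extra factor of $\delta_t$ beyond the naive triangle-inequality bound: the midpoint-rule cancellation in $A_t$ and the Abel summation in $B_t$ each supply one such factor, and without either of them the bound degrades to $\Or(\delta_t)$. Carrying out the cancellation in $A_t$ under the weak regularity $F''\in C^0\cap L^1$ (no $F'''$ available) is the delicate step, which is why I would keep the Taylor remainder of $F'$ as an integral $\int_s^u F''(w)\,\dx w$ rather than as a pointwise expansion, and bound the resulting per-cell error by $\max_{\text{cell}}|F''|$ before summing.
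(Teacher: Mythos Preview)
Your argument is correct (modulo a slip: in the $B_t$ paragraph you write $|R_t|\le C\delta_t^2|G_t|$, but with your convention $F_t=F(\cdot+\delta_t/2)+\delta_t^2 R_t$ the bound is $|R_t|\le C|G_t|$; your displayed estimate for $|B_t|$ is consistent with the latter). The route, however, differs from the paper's. You substitute the hypothesis first and then treat the main piece $A_t$ by a tiling/midpoint argument and the remainder $B_t$ by Abel summation. The paper instead performs Abel summation on the whole discrete moment $\sum_s s^m(F_t(s)-F_t(s-\delta_t))$ before using the hypothesis: splitting into $s>0$ (pairing with $F_t-1$) and $s<0$ (pairing with $F_t$) and shifting to $\tilde s=s+\delta_t/2$ turns the weight into the centered difference $(\tilde s-\tfrac{\delta_t}{2})^m-(\tilde s+\tfrac{\delta_t}{2})^m=-m\delta_t\tilde s^{m-1}+\Or(\delta_t^3)$, so the midpoint cancellation you extract by hand in $A_t$ appears automatically, and only then is $F_t(\tilde s-\tfrac{\delta_t}{2})=F(\tilde s)+\Or(\delta_t^2)G_t$ inserted. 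The paper's version is more compact (one Abel step, no separate tiling analysis), while yours makes the role of the midpoint rule more transparent; both rely on the same implicit integrability of $|s|^{m-1}|F''(s)|$ that neither the hypotheses nor the paper make fully explicit.
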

\begin{proof}
Let us set $I_t=I_t^+ \cup I_t^-$ where $I_t^\pm=I_t\cap \R_{\pm}$, $w=\sup I_t^-+\delta_t=\inf I_t^+$, and $\tilde I_t^\pm:=I_t^\pm+\delta_t/2$. Then, for any $m\geq 1$,
\begin{equation}\label{eq8}
\begin{aligned}
\int_{\R} & s^m \, \dx F_t(s)\equiv \delta_t \sum_{s\in I_t} s^m \, p_t(s)= \sum_{s\in I_t} s^m (F_t(s)-F_t(s-\delta_t)) \\
&= \sum_{s\in I_t^+} s^m (F_t(s)-1) - \sum_{s\in I_t^+} (s+\delta_t)^m (F_t(s)-1)+ \sum_{s\in I_t^-} s^m F_t(s)\\
&- \sum_{s\in I_t^-} (s+\delta_t)^m F_t(s)+ w^m F_t(w-\delta_t)-w^m (F_t(w-\delta_t)-1)\\
&=w^m+\sum_{s\in I_t^+} (s^m-(s+\delta_t)^m) (F_t(s)-1)+ \sum_{s\in I_t^-} (s^m-(s+\delta_t)^m) F_t(s).
\end{aligned}
\end{equation}
Now we set $\tilde s=s+\tfrac{\delta_t}{2}$ so that
\begin{equation}
(s^m-(s+\delta_t)^m)=(\tilde s-\tfrac{\delta_t}{2})^m-(\tilde s+\tfrac{\delta_t}{2})^m=-m \delta_t \tilde s^{m-1} + \Or(\delta_t^3).
\end{equation}
Then using $F_t(\tilde s-\tfrac{\delta_t}{2})=F(\tilde s)+\Or(\delta_t^2)G_t(\tilde s)$ we obtain
\begin{equation}
\begin{aligned}
\eqref{eq8}&=w^m -m \delta_t \sum_{\tilde s\in \tilde I_t^+} \tilde s^{m-1} (F(\tilde s)-1)
-m\delta_t \sum_{\tilde s\in \tilde I_t^-} \tilde s^{m-1} F(\tilde s)+ \Or(\delta_t^2)\\
&=w^m- m \delta_t \int_{w}^\infty \dx s\, s^{m-1} (F(s)-1)
-m\delta_t \int_{-\infty}^{w} \dx s\,  s^{m-1} F(s)+ \Or(\delta_t^2)\\
 &= \int_{\R} s^m \, \dx F(s)+\Or(\delta_t^2).
\end{aligned}
\end{equation}
where we used Lemma~\ref{LemmaSumIntNdim} to approximate the sums by the integrals.
\end{proof}

\subsection{How to fit the experimental data}
For completeness, we explain shortly how to fit the experimental data. We partially follow the description of the Supplementary Notes of \cite{TSSS11} and use their notations. Let us assume that we observed a growth process which is thought to belong to the KPZ class. Let $S=\e\, \Z$, $\e>0$, be a discrete subset of $\R$, where the values of the height function at time $t$, denoted by $h_t$, lives. Let $N\gg 1$ the number of experimental measurements and denote by $\langle \, \cdot \, \rangle$ the empirical average over the $N$ experiments.
Having \eqref{eq7b} in mind, we expect to have
\begin{equation}
h_t\simeq v_\infty \, t +(\Gamma t)^{1/3} \zeta + a
\end{equation}
where $\zeta$ is a GUE (resp.\ GOE) Tracy-Widom distributed random variable for curved (resp.\ flat) limit shape, $v_\infty$ the asymptotic growth velocity and $a$ a constant.
\begin{itemize}
\item[(1)] \emph{Determine the asymptotic growth velocity $v_\infty$}. Using
\begin{equation}
\frac{\dx \langle h_t\rangle}{\dx t} \simeq v_\infty+b \, t^{-2/3},\quad b=\Gamma^{1/3}\E(\zeta)/3.
\end{equation}
one obtains $v_\infty$ from the plot $(t^{-2/3},\frac{\dx \langle h_t\rangle}{\dx t})$.
\item[(2)] \emph{Verify the fluctuation scaling exponent and the fluctuation amplitude $\Gamma$}. With a log-log plot we can verify if the power $2/3$ in
\begin{equation}
\langle(h_t - \langle h_t\rangle)^2\rangle \simeq (\Gamma t)^{2/3} \Var(\zeta)
\end{equation}
holds and at the same time measure the constant $\Gamma\neq 0$.
\item[(3)] \emph{Determine the shift parameter $a$}. Consider the standard KPZ scaling
$\widetilde h_{t,\rm resc} := (h_t - v_\infty\, t)/(\Gamma \,t)^{1/3}$.
Then, $a$ is measured according to the relation
\begin{equation}
\langle \widetilde h_{t,\rm resc} \rangle -\E(\zeta) \simeq a \, (\Gamma t)^{-1/3}.
\end{equation}
\end{itemize}
Now that we have determined $v_\infty$, $\Gamma$ and $a$, we fit the data vs.\ the theoretical predictions. We set $h_{t,\rm resc}:=\widetilde h_{t,\rm resc}-a \,(\Gamma t)^{-1/3}$.
\begin{itemize}
\item[(4.1)] \emph{Density}: We do a plot of the frequencies of $h_{t,\rm resc}$ with the set \mbox{$(S-v_\infty t-a)/(\Gamma t)^{1/3}$} in the abscissa axis. Then we compare this with the graph of the Tracy-Widom densites $s\mapsto F'(s)$.
\item[(4.2)] \emph{Distribution function}: We do a plot of the cumulated frequencies of $h_{t,\rm resc}$ with the set $(S-v_\infty t-a)/(\Gamma t)^{1/3}$ in the abscissa axis. Then we compare this with the graph of the (shifted) Tracy-Widom distribution function $s\mapsto F(s+\tfrac12\e/ (\Gamma t)^{1/3}))$.
\end{itemize}

\section{PNG and TASEP}\label{SectPNGTASEP}
In this section we determine the value of the order $1$ shifts for the PNG and TASEP models, both with flat and curved geometry.

\subsection{Flat PNG}
In~\cite{BFS07b} the formula for the height function $h_t$ at time $t$ for the flat PNG was obtained\footnote{For the one-point distribution there exists also a formulation in terms of Fredholm Pfaffian.}. It is shown that
\begin{equation}
\Pb(h_t(0)\leq H)=\det(\Id-K_t^{\rm flatPNG})_{\ell^2(\{H+1,H+2,\ldots\})}
\end{equation}
with
\begin{equation}
K_t^{\rm flat PNG}(x_1,x_2)=J_{x_1+x_2}(4t)=\frac{1}{2\pi\I}\oint_{\Gamma_0}\dx z\, \frac{e^{2t(z-z^{-1})}}{z^{x_1+x_2+1}},
\end{equation}
where $J_n$ is the standard Bessel function (we use the conventions of~\cite{AS84})\footnote{With the notation $\Gamma_S$, with $S$ a set, we mean any simple counterclockwise oriented path encircling the set $S$.}.
As $t\to\infty$, we consider the scaling
\begin{equation}
H(s)=2t+s(2t)^{1/3}\in \N \quad\Rightarrow \quad s\in I_t=(\N-2t) (2t)^{-1/3}
\end{equation}
Under this scaling it is known that~\cite{BFS07b}
\begin{multline}
K_{t,\rm resc}^{\rm flatPNG}(s_1,s_2):=(2t)^{1/3} K_t^{\rm flatPNG}(H(s_1),H(s_2)) \\
\to \Ai(s_1+s_2) = K_{{\cal A}_1}(s_1,s_2)
\end{multline}
as $t\to\infty$ and uniformly for $s_1,s_2$ in bounded sets. Moreover, there are exponential bounds for the decay of $K_{t,\rm resc}^{\rm flatPNG}$ (see, e.g., Appendix A.2 of~\cite{Fer04}) which ensures that we can take the limit $t\to\infty$ inside the Fredholm determinant, leading to\footnote{In~\cite{BFS07b} the result is for joint distributions of the height function at different positions. The one-point distribution was also obtained through its relation with symmetrized permutations~\cite{BR01b}.}
\begin{equation}
\lim_{t\to\infty}\Pb(h_t(0)\leq 2t+s (2t)^{1/3})=\det(\Id-K_{{\cal A}_1})_{L^2((s,\infty))} =F_{\rm GOE}(2s),
\end{equation}
where $F_{\rm GOE}$ is the GOE Tracy-Widom distribution function~\cite{TW96}.

Here we focus on the first order correction of $K_{t,\rm resc}^{\rm flatPNG}$ with respect to $K_{{\cal A}_1}$ and show that it is zero. Since the asymptotic analysis is quite standard (see e.g.\ Lemma 6.1 of~\cite{BF08} for the explanation of general strategy), here and in the next sections we indicate only the important steps.
\begin{prop}
Uniformly for $s_1,s_2$ in a bounded subset of $I_t$,
\begin{equation}
K_{t,\rm resc}^{\rm flatPNG}(s_1,s_2)= K_{{\cal A}_1}(s_1,s_2)+\Or(t^{-2/3}).
\end{equation}
\end{prop}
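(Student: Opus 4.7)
The plan is a steep-descent analysis of the integral representation of $J_{H(s_1)+H(s_2)}(4t)$, carried out to one order beyond leading, so that the correction to $K_{{\cal A}_1}$ can be identified and shown to vanish. Set $\delta_t:=(2t)^{-1/3}$ and $x:=s_1+s_2$. Writing $n:=H(s_1)+H(s_2)=4t+x(2t)^{1/3}$, we have
\begin{equation*}
K^{\rm flatPNG}_{t,{\rm resc}}(s_1,s_2)=\delta_t^{-1}\frac{1}{2\pi\I}\oint_{\Gamma_0}\dx z \,\frac{e^{2t(z-z^{-1})}}{z^{n+1}}=\delta_t^{-1}\frac{1}{2\pi\I}\oint_{\Gamma_0}\dx z\, e^{f_t(z)},
\end{equation*}
with $f_t(z):=2t(z-z^{-1})-(n+1)\log z$. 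The double saddle of the leading part $2t(z-z^{-1})-4t\log z$ sits at $z=1$. I would substitute $z=1+W\delta_t$, so that $\dx z=\delta_t\, \dx W$, and the prefactor $\delta_t^{-1}$ is absorbed.

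Next I would Taylor-expand $f_t(1+W\delta_t)$ in $\delta_t$. Using $z-z^{-1}=2w-w^2+w^3-w^4+\cdots$ and $\log(1+w)=w-w^2/2+w^3/3-w^4/4+\cdots$, the terms linear in $t$ combine to $(2t/3)w^3-tw^4+O(tw^5)$, which after $w=W\delta_t$ yields the Airy cubic $W^3/3$ at leading order and a correction $-W^4/2\cdot\delta_t$ at order $\delta_t$. The terms proportional to $(2t)^{1/3}x$ give $-xW$ at leading order and $xW^2/2\cdot\delta_t$ at order $\delta_t$, while the harmless $+1$ in $n+1$ contributes $-W\delta_t$. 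All further terms are $O(\delta_t^2)$ uniformly for $W$ in a bounded neighborhood of the contour. Thus
\begin{equation*}
f_t(1+W\delta_t)=\tfrac{W^3}{3}-xW+\delta_t\,g(W,x)+O(\delta_t^2),\qquad g(W,x):=-\tfrac{W^4}{2}+\tfrac{x\,W^2}{2}-W.
\end{equation*}

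I would then deform $\Gamma_0$ to the standard Airy contour $\mathcal{C}$ from $\infty e^{-\I\pi/3}$ to $\infty e^{\I\pi/3}$, passing through $W=0$; this is a routine steep-descent argument along lines already used in, e.g., Lemma 6.1 of~\cite{BF08} and the appendices of~\cite{Fer04}, giving exponentially small tails. Expanding $e^{\delta_t g+O(\delta_t^2)}=1+\delta_t g(W,x)+O(\delta_t^2)$ and integrating term by term yields
\begin{equation*}
K^{\rm flatPNG}_{t,{\rm resc}}(s_1,s_2)=\Ai(x)+\delta_t\cdot\frac{1}{2\pi\I}\int_{\mathcal{C}}\dx W\, e^{W^3/3-xW}\,g(W,x)+O(\delta_t^2).
\end{equation*}
The key observation is that the $O(\delta_t)$ coefficient vanishes identically. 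Setting $I_k(x):=\frac{1}{2\pi\I}\int_{\mathcal{C}}W^k e^{W^3/3-xW}\dx W$, one has $I_{k+1}=-I_k'$, so $I_1=-\Ai'(x)$, $I_2=\Ai''(x)=x\Ai(x)$ by the Airy ODE, and iterating $I_4=2\Ai'(x)+x^2\Ai(x)$. Therefore
\begin{equation*}
-\tfrac12 I_4+\tfrac{x}{2}I_2-I_1=-\tfrac12(2\Ai'(x)+x^2\Ai(x))+\tfrac{x^2}{2}\Ai(x)+\Ai'(x)=0,
\end{equation*}
which is exactly the antisymmetry-type cancellation announced in~\eqref{eq3.5}. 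This yields the claimed bound $K^{\rm flatPNG}_{t,{\rm resc}}(s_1,s_2)=K_{{\cal A}_1}(s_1,s_2)+O(t^{-2/3})$, uniformly for $s_1,s_2$ in a bounded set.

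I expect the computational heart of the proof to be painless, once the expansion of $f_t$ is organised as above and the Airy ODE is invoked. The main technical obstacle is the usual one for steep-descent asymptotics of such Fredholm kernels: justifying the deformation of $\Gamma_0$ to the Airy contour and producing a genuinely uniform $O(\delta_t^2)$ remainder, rather than a merely pointwise one. Since the paper only needs trace-class control of an $O(\delta_t^2)$ correction on half-lines of the form $(s,\infty)$, I would combine the local expansion near $z=1$ with the exponential decay estimates for $J_n(4t)$ away from $z=1$ referenced above, producing a bound of the form $|K^{\rm flatPNG}_{t,{\rm resc}}(s_1,s_2)-K_{{\cal A}_1}(s_1,s_2)|\leq C\delta_t^2 e^{-c(s_1+s_2)}$ on any half-line, which is exactly what the Fredholm-determinant step in Section~\ref{SectDiscreteness} requires.
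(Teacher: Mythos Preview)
Your proposal is correct and follows essentially the same approach as the paper: substitute $z=1+W(2t)^{-1/3}$, expand the exponent to one order beyond leading, and then use the Airy ODE $\Ai''(x)=x\Ai(x)$ to verify that the $\Or(t^{-1/3})$ contribution vanishes identically. The paper writes the correction as $-t^{-1/3}\bigl(Z/2^{1/3}-\zeta Z^2/2^{4/3}+Z^4/2^{4/3}\bigr)$ with $\zeta=s_1+s_2$, which after converting $t^{-1/3}=2^{1/3}\delta_t$ is exactly your $\delta_t\,g(W,x)$; your $I_k$ bookkeeping and the paper's direct appeal to $\Ai^{(4)}$, $\Ai''$, $\Ai'$ are two ways of recording the same cancellation.
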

\begin{proof}
We have
\begin{equation}\label{eqRescFlatPNG}
K_{t,\rm resc}^{\rm flatPNG}(s_1,s_2) = \frac{(2t)^{1/3}}{2\pi \I} \oint_{\Gamma_0} \dx z \, \frac{e^{2t(z-z^{-1})}}{z^{4t+(s_1+s_2)(2t)^{1/3}+1}}.
\end{equation}
The function $z \mapsto z-z^{-1}-2\ln z$ has a double critical point at $z_c=1$. The steepest descent path can be taken to be coming into $z_c$ with an angle $e^{-\pi\I/3}$, leaving with an angle $e^{\pi\I/3}$, and completed by a piece of a circle around zero with a radius strictly larger than $1$. Then, the leading term in the asymptotic of $K_{t,\rm resc}^{\rm flatPNG}$ comes from a $t^{-1/3}$-neighborhood of $z_c$. Setting $z=1+Z (2t)^{-1/3}$ and doing the large $t$ expansion of the integrand in (\ref{eqRescFlatPNG}), one obtains
\begin{multline}
K_{t,\rm resc}^{\rm flatPNG}(s_1,s_2) = \frac{1}{2\pi\I} \int_{\infty e^{-\pi \I/3}}^{\infty e^{\pi \I/3}} \dx Z \, \exp\left(\frac{Z^3}{3} - \zeta Z\right) \\
 \qquad \qquad \times  \left(1-t^{-1/3}\left( \frac{Z}{2^{1/3}} - \frac{\zeta Z^2}{2^{4/3}} + \frac{Z^4}{2^{4/3}}\right)+\Or(t^{-2/3})\right),
\end{multline}
where we have set for simplicity $\zeta:=s_1+s_2$. Using the contour integral representation of the Airy function (\ref{eqAiry}) we have
\begin{multline}
K_{t,\rm resc}^{\rm flatPNG}(s_1,s_2) = \Ai(\zeta) + t^{-1/3} \bigg[\frac{\Ai'(\zeta)}{2^{1/3}} + \frac{\zeta \Ai''(\zeta)}{2^{4/3}}-\frac{\Ai^{(4)}(\zeta)}{2^{4/3}}\bigg]+\Or(t^{-2/3}).
\end{multline}
Finally, using the identity $\Ai''(\zeta)=\zeta \Ai(\zeta)$ of the Airy function one readily gets that the square bracket is equal to zero.
\end{proof}

\subsection{PNG droplet}
The formula for the height function $h_t$ at time $t$ for the PNG droplet was determined in~\cite{PS02}. Let us fix $c\in (-1,1)$. Then,
\begin{equation}
\Pb(h_t(ct)\leq H)=\det(\Id-K_{t,c}^{\rm curvPNG})_{\ell^2(\{H+1,H+2,\ldots\})}
\end{equation}
with
\begin{equation}
K_{t,c}^{\rm curvPNG}(x_1,x_2)=\sum_{\ell\geq 0} J_{x_1+\ell}(2t\sqrt{1-c^2})J_{x_2+\ell}(2t\sqrt{1-c^2}).
\end{equation}
We consider the case $c=0$ (and drop the index $c$) since the general case is simply obtained by replacing $t$ by $t\sqrt{1-c^2}$. An integral representation of the kernel is given by
\begin{equation}
K_t^{\rm curvPNG}(x_1,x_2)=\frac{1}{(2\pi\I)^2}\oint_{\Gamma_0}\dx w\oint_{\Gamma_{0,w}}\dx z \, \frac{e^{2t(z-z^{-1})}}{e^{2t(w-w^{-1})}}\frac{w^{x_2-1}}{z^{x_1}}\frac{1}{z-w}.
\end{equation}
As $t\to\infty$, we consider the scaling
\begin{equation}
H(s)=2t+s t^{1/3}+a\in \N \quad\Rightarrow \quad s\in I_t=(\N-2t-a) t^{-1/3}
\end{equation}
for a $t$-independent constant $a$ to be specified later. In~\cite{PS02} it is proven that the rescaled kernel converges to the Airy kernel, namely
\begin{multline}
K_{t,\rm resc}^{\rm curvPNG}(s_1,s_2):=t^{1/3} K_t^{\rm curvPNG}(H(s_1),H(s_2))\\
\to \int_{\R_+}\dx \lambda \Ai(s_1+\lambda)\Ai(s_2+\lambda)=K_{{\cal A}_2}(s_1,s_2),
\end{multline}
as $t\to\infty$ and uniformly for $s_1,s_2$ in bounded sets. Moreover, exponential bounds for the decay of $K_{t,\rm resc}^{\rm curvPNG}$ ensure that\footnote{The extension to joint distributions was obtained in~\cite{PS02}, while the one-point result is reported in~\cite{PS00} using a mapping to the Poissonized longest increasing subsequence problem, which was already solved in~\cite{BDJ99}.}
\begin{equation}
\lim_{t\to\infty}\Pb(h_t(0)\leq 2t+s t^{1/3}+a)=\det(\Id-K_{{\cal A}_2})_{L^2((s,\infty))} = F_{\rm GUE}(s),
\end{equation}
where $F_{\rm GUE}$ is the GUE Tracy-Widom distribution function~\cite{TW94}.

The first order correction of $K_{t,\rm resc}^{\rm curvPNG}$ with respect to $K_{{\cal A}_2}$ is the following.
\begin{prop}
Uniformly for $s_1,s_2$ in a bounded subset of $I_t$, with the choice $a=1/2$,
\begin{equation}
K_{t,\rm resc}^{\rm curvPNG}(s_1,s_2)=K_{{\cal A}_2}(s_1,s_2)+\Or(t^{-2/3}).
\end{equation}
\end{prop}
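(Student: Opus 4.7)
The plan is to adapt the steep-descent analysis used above for flat PNG to the double contour integral, now tracking the order $t^{-1/3}$ correction and exploiting the free shift $a$ to cancel it. Both contour integrations involve the same phase $g(z):=z-z^{-1}-2\ln z$ (arising after absorbing the $z^{-2t}$ and $w^{2t}$ factors into the exponent), which has a triple critical point at $z=1$ with $g(1+u)=u^3/3-u^4/2+3u^5/5+\Or(u^6)$. I would deform $\Gamma_0$ into a local steep-descent arc through $w_c=1$ leaving along $e^{\pm 2\I\pi/3}$ completed by an interior piece around $0$, and $\Gamma_{0,w}$ into an arc through $z_c=1$ leaving along $e^{\pm\I\pi/3}$ completed by an exterior piece, arranged so that the two deformed contours remain disjoint with $\Gamma_{0,w}$ enclosing $\Gamma_0$. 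The exponential decay of $\Re\,g$ away from $1$ along these contours, the same input used to establish the $K_{\Airytwo}$ limit, reduces the analysis to a $t^{-1/3}$-neighborhood of $z=w=1$ up to exponentially small errors.

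Substituting $z=1+Zt^{-1/3}$, $w=1+Wt^{-1/3}$ and combining the Jacobians with $1/(z-w)=t^{1/3}/(Z-W)$, the rescaled kernel becomes
\begin{equation*}
K_{t,\rm resc}^{\rm curvPNG}(s_1,s_2)=\frac{1}{(2\pi\I)^2}\iint\frac{e^{E_0(Z,W)+t^{-1/3}E_1(Z,W)+\Or(t^{-2/3})}}{Z-W}\,\dx Z\,\dx W
\end{equation*}
with $E_0(Z,W)=\tfrac{Z^3}{3}-s_1Z-\tfrac{W^3}{3}+s_2W$ and $E_1(Z,W)=-\tfrac{Z^4-W^4}{2}+\tfrac{s_1Z^2-s_2W^2}{2}-aZ+(a-1)W$. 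The leading term reproduces $K_{\Airytwo}(s_1,s_2)$ via $\tfrac{1}{Z-W}=\int_0^\infty e^{-\lambda(Z-W)}\dx\lambda$ together with the Airy contour formula $\Ai(s)=\tfrac{1}{2\pi\I}\int e^{Z^3/3-sZ}\dx Z$. For the order $t^{-1/3}$ correction, expanding $e^{t^{-1/3}E_1}=1+t^{-1/3}E_1+\Or(t^{-2/3})$ and factorizing via the same $\lambda$-representation reduces every monomial $\tfrac{1}{2\pi\I}\int Z^k e^{Z^3/3-(s_1+\lambda)Z}\dx Z$ to $\Ai$ and $\Ai'$ at $s_1+\lambda$ via the Airy equation $\Ai''(x)=x\Ai(x)$ (for instance $k=4$ yields $2\Ai'(s_1+\lambda)+(s_1+\lambda)^2\Ai(s_1+\lambda)$), and analogously for $W$. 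Summing the six contributions, the $\lambda$-integrand collapses to
\begin{equation*}
(a-1)\Ai'(s_1+\lambda)\Ai(s_2+\lambda)+a\,\Ai(s_1+\lambda)\Ai'(s_2+\lambda)+\tfrac{(s_2-s_1)\lambda}{2}\Ai(s_1+\lambda)\Ai(s_2+\lambda).
\end{equation*}
Using the Wronskian-type identity $\partial_\lambda[\Ai(s_1+\lambda)\Ai'(s_2+\lambda)-\Ai'(s_1+\lambda)\Ai(s_2+\lambda)]=(s_2-s_1)\Ai(s_1+\lambda)\Ai(s_2+\lambda)$ and integrating the last summand by parts (boundary terms vanish by the super-exponential decay of $\Ai$), the three pieces telescope into
\begin{equation*}
(a-\tfrac12)\int_0^\infty\partial_\lambda\bigl[\Ai(s_1+\lambda)\Ai(s_2+\lambda)\bigr]\dx\lambda=-(a-\tfrac12)\Ai(s_1)\Ai(s_2),
\end{equation*}
which vanishes precisely when $a=1/2$.

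The main obstacle is the algebraic bookkeeping that leads to this telescoping: one must expand $g$ up to $u^4$ and $\ln(1+u)$ up to $u^2$, track the six terms in $E_1$ together with their six corresponding Airy integrals, and recognize that after a single integration by parts the sum reduces to a total $\lambda$-derivative with coefficient $(a-1/2)$. Once this cancellation is identified, the $\Or(t^{-2/3})$ error is routine: the sub-leading terms in the expansion of the integrand satisfy Gaussian-type bounds along the steep-descent contours, the same bounds that guarantee trace-class convergence in the proof of the $K_{\Airytwo}$ limit, and the contribution from outside an $\Or(t^{-1/3})$-neighborhood of $z=w=1$ is exponentially small.
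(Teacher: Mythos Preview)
Your proposal is correct and follows the same steepest-descent route as the paper: both localize the double contour integral near the triple critical point $z=w=1$, substitute $z=1+Zt^{-1/3}$, $w=1+Wt^{-1/3}$, and expand to first order in $t^{-1/3}$. (Your coefficient $-\tfrac{Z^4-W^4}{2}$ is in fact the right one; the paper's displayed $-\tfrac{Z^4-W^4}{4}$ appears to be a misprint, since its own formula for $P$ carries the coefficient consistent with yours.)

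Where you differ is only in how the first-order cancellation is organized. The paper observes that with $a=\tfrac12$ the bracket is antisymmetric under $(Z,s_1)\leftrightarrow(W,s_2)$, writes the correction as $P(s_1,s_2)-P(s_2,s_1)$, and then shows $P$ is symmetric via $\Ai''(x)=x\Ai(x)$ and integration by parts. You instead reduce all six Airy terms directly, invoke the Wronskian identity $\partial_\lambda[\Ai(s_1+\lambda)\Ai'(s_2+\lambda)-\Ai'(s_1+\lambda)\Ai(s_2+\lambda)]=(s_2-s_1)\Ai(s_1+\lambda)\Ai(s_2+\lambda)$, integrate by parts once, and obtain the closed form $-(a-\tfrac12)\Ai(s_1)\Ai(s_2)$. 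Your route has the merit of exhibiting the explicit $a$-dependence of the correction (matching what the paper records separately in Remark~\ref{RemarkNoShift}); the paper's route emphasizes the antisymmetry mechanism that drives the general argument of Section~\ref{SectDiscreteness}. Both rely on the same analytic ingredients and the same single integration by parts.
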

\begin{proof}
The rescaled kernel is
\begin{equation}\label{eqRescPNGdroplet}
K_{t,\rm resc}^{\rm curvPNG}(s_1,s_2)=\frac{t^{1/3}}{(2\pi\I)^2}\oint_{\Gamma_0}\dx w\oint_{\Gamma_{0,w}}\dx z \, \frac{e^{t(z-z^{-1})}}{e^{t(w-w^{-1})}}\frac{w^{2t+s_2t^{1/3}+a-1}}{z^{2t+s_1t^{1/3}+a}}\frac{1}{z-w}.
\end{equation}
Here, we have to integrate over two contours, the one in the $z$-variable enclosing the contour in the $w$-variable. The steepest descent path for $z$ can be taken as in the flat case, the one for $w$ leaves the critical points with an angle $e^{2\pi\I/3}$ and then is completed by a piece of a circle of radius strictly smaller than $1$. Doing the change of variables
\begin{equation}
z=1+t^{-1/3}Z,\quad w=1+t^{-1/3}W,
\end{equation}
we eventually get
\begin{multline}
K_{t,\rm resc}^{\rm curvPNG}(s_1,s_2)=\frac{1}{(2\pi\I)^2}\int_{\infty e^{-2\pi\I/3}}^{\infty e^{2\pi\I/3}}\dx W
\int_{\infty e^{-\pi\I/3}}^{\infty e^{\pi\I/3}}\dx Z \frac{e^{Z^3/3-s_1 Z}}{e^{W^3/3-s_2 W}}\frac{1}{Z-W}\\
\times \bigg(1 +t^{-1/3}\biggl[\frac{s_1 Z^2-s_2 W^2}{2} - \frac{Z^4-W^4}{4}
-a Z + (a-1)W \biggr]+\Or(t^{-2/3})\bigg),
\end{multline}
where the integration paths do not intersect. At this point we see that setting $a =1/2$ the first order term is antisymmetric. In particular we can choose the paths to satisfy $\Re Z >\Re W$ and use $\frac{1}{Z-W}=\int_0^\infty\dx\lambda e^{-\lambda(Z-W)}$ to get
\begin{equation}
K_{t,\rm resc}^{\rm curvPNG}(s_1,s_2) = K_{{\cal A}_2}(s_1,s_2)+t^{-1/3}\bigl(P(s_1,s_2)-P(s_2,s_1)\bigr) +\Or(t^{-2/3})
\end{equation}
with $P$ given by
\begin{equation}
P(s_1,s_2)=\frac{1}{2}\int_0^\infty \dx\lambda \Ai(s_2+\lambda) \left[\frac{\dx}{\dx s_1}+ s_1 \frac{\dx^2}{\dx s_1^2} - \frac{\dx^4}{\dx s_1^4}\right] \Ai(s_1+\lambda).
\end{equation}
Using $\Ai''(x)=x\Ai(x)$ and integration by parts one then shows $P(s_1,s_2)=P(s_2,s_1)$.
\end{proof}

Without the shift by $a$ in the scaling, the result would have been
\begin{equation}
K_{t,\rm resc}^{\rm curvPNG}(s_1,s_2) = K_{{\cal A}_2}(s_1,s_2)-\tfrac12 t^{-1/3}(-\Ai(s_1)\Ai(s_2))) +\Or(t^{-2/3})
\end{equation}
from which we can read off the shift $a=1/2$, compare with Remark~\ref{RemarkNoShift}.

\begin{remark}
The shift by $a=1/2$ is actually independent of $c\in (-1,1)$, which is due to the fact that it is built up during the first stages of the growth process and for large $t$ it converges to $1/2$. Therefore for large $t$, the shift at time $t\sqrt{1-c^2}$ is the same as for the model at time $t$.
\end{remark}

\subsection{TASEP with alternating initial condition}
Now consider TASEP with alternating initial condition, $x_k(t=0)=-2 k$, $k\in \Z$. The joint distribution of particle positions for this initial condition has been determined in~\cite{BFPS06}. For one particle (here $-x_n(t)$ plays the role of $h_t$), we have
\begin{equation}
\Pb(x_{n}(t)\geq X)=\det(\Id-K_{t,n}^{\rm flatTASEP})_{\ell^2(\{\ldots,X-2,X-1\})}
\end{equation}
with
\begin{equation}
K_{t,n}^{\rm flatTASEP}(x_1,x_2)=\frac{-1}{2\pi\I}\oint_{\Gamma_1}\dx z\, e^{t(1-2z)}\frac{z^{2n+x_2}}{(1-z)^{2n+x_1+1}}.
\end{equation}
As $t\to\infty$, we consider the scaling
\begin{equation}\label{eq3.37}
X(s)=-2n+\tfrac{1}{2}t-s t^{1/3}-a\in \Z \quad\Rightarrow \quad s\in I_t=(\Z-t/2+a) t^{-1/3}
\end{equation}
for a $t$-independent constant $a$ to be specified later.
It is known\footnote{The prefactor $2^{X(s_2)-X(s_1)}$ is just a conjugation, which does not change the underlying determinantal point process, but it is needed to have a well-defined limit.} that~\cite{BFPS06}
\begin{multline}
K_{t,\rm resc}^{\rm flatTASEP}(s_1,s_2):=2^{X(s_2)-X(s_1)} t^{1/3} K_t^{\rm flatTASEP}(X(s_1),X(s_2))\\
\to K_{{\cal A}_1}(s_1,s_2)
\end{multline}
as $t\to\infty$ and uniformly for $s_1,s_2$ in bounded sets. Moreover, exponential bounds for the decay of $K_{t,\rm resc}^{\rm flatTASEP}$ ensure that
\begin{equation}
\lim_{t\to\infty}\Pb(x_0(t)\geq t/2-s t^{1/3}-a)=\det(\Id-K_{{\cal A}_1})_{L^2((s,\infty))} = F_{\rm GOE}(2s).
\end{equation}

\begin{figure}
\begin{center}
\psfrag{S}{$s$}
\psfrag{F}{}
\psfrag{1.0}[][][0.8]{$1.0$}
\psfrag{0.8}[][][0.8]{$0.8$}
\psfrag{0.6}[][][0.8]{$0.6$}
\psfrag{0.4}[][][0.8]{$0.4$}
\psfrag{0.2}[][][0.8]{$0.2$}
\psfrag{0.0}[][][0.8]{$0$}
\psfrag{0.5}[][][0.8]{$0.5$}
\psfrag{-0.5}[][][0.8]{$-0.5\hspace{0.8em}$}
\psfrag{-1.0}[][][0.8]{$-1.0\hspace{0.8em}$}
\psfrag{-1.5}[][][0.8]{$-1.5\hspace{0.8em}$}
\psfrag{-2.0}[][][0.8]{$-2.0\hspace{0.8em}$}
\includegraphics[height=10cm,angle=-90]{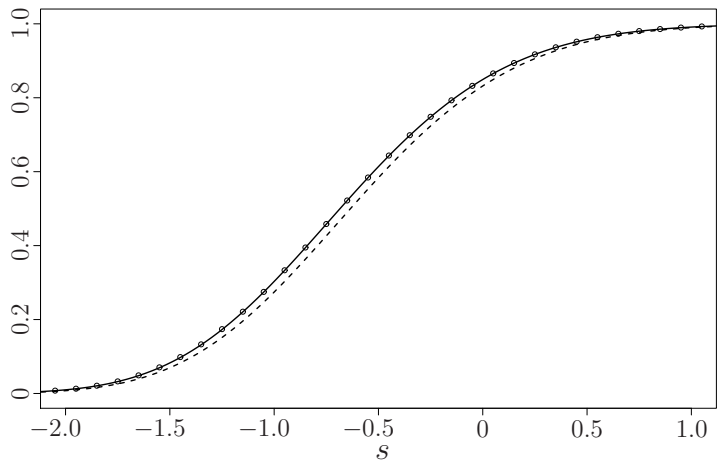}
\caption{Distribution function of $x_{n=[t/4]}(t)$ for TASEP with alternating initial conditions and $t=1000$. The number of runs is $10^6$. The dots are the plot of \mbox{$(s\in I_t,\Pb(x_{[t/4]}(t)\geq X(s)))$} with $\delta_t=t^{-1/3}$ and $a=1/2$. The solid line is \mbox{$(s,F_{\rm GOE}(2s+\tfrac12\delta_t))$} while the dashed line is \mbox{$(s,F_{\rm GOE}(2s+\tfrac12\delta_t-a\delta_t)))$}, where the shift by $a\delta_t$ for dashed line follows from the definition $I_t$, see (\ref{eq3.37}).}
\label{FigFlatCDF}
\end{center}
\end{figure}

\begin{figure}
\begin{center}
\psfrag{S}{$s$}
\psfrag{P}{}
\psfrag{0.6}[][][0.8]{$0.6$}
\psfrag{0.5}[][][0.8]{$0.5$}
\psfrag{0.4}[][][0.8]{$0.4$}
\psfrag{0.3}[][][0.8]{$0.3$}
\psfrag{0.2}[][][0.8]{$0.2$}
\psfrag{0.1}[][][0.8]{$0.1$}
\psfrag{1.0}[][][0.8]{$1.0$}
\psfrag{0.0}[][][0.8]{$0$}
\psfrag{-0.5}[][][0.8]{$-0.5\hspace{0.8em}$}
\psfrag{-1.0}[][][0.8]{$-1.0\hspace{0.8em}$}
\psfrag{-1.5}[][][0.8]{$-1.5\hspace{0.8em}$}
\psfrag{-2.0}[][][0.8]{$-2.0\hspace{0.8em}$}
\includegraphics[height=10cm,angle=-90]{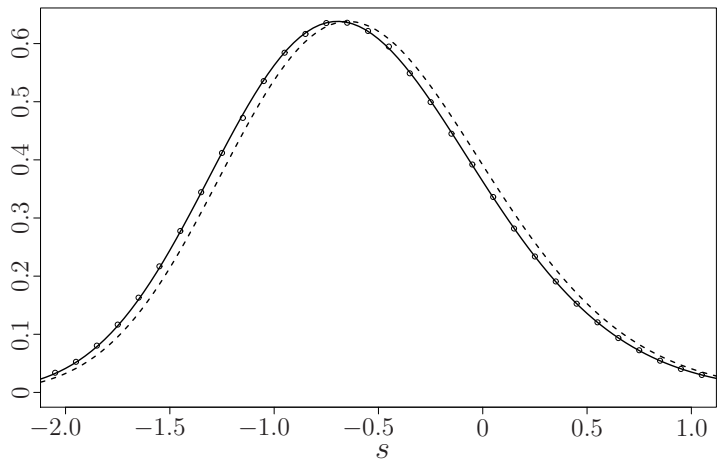}
\caption{Probabilities for $x_{n=[t/4]}(t)$ for TASEP with alternating initial conditions and $t=1000$. The number of runs is $10^6$. The dots are the plot of \mbox{$(s\in I_t,\Pb(x_{[t/4]}(t)= X(s)) \delta_t^{-1})$} with $\delta_t=t^{-1/3}$ and $a=1/2$. The solid line is \mbox{$(s,2F'_1(2s))$} while the dashed line is \mbox{$(s,2F'_1(2s-a\delta_t))$}.}
\label{FigFlatDensity}
\end{center}
\end{figure}

The first order correction of $K_{t,\rm resc}^{\rm flatTASEP}$ with respect to $K_{{\cal A}_1}$ is given as follows.
\begin{prop}
Uniformly for $s_1,s_2$ in a bounded subset of $I_t$, with the choice $a=1/2$, it holds
\begin{equation}
K_{t,\rm resc}^{\rm flatTASEP}(s_1,s_2)= K_{{\cal A}_1}(s_1,s_2)+t^{-1/3}K_{\rm asym}(s_1,s_2)+\Or(t^{-2/3}),
\end{equation}
where $K_{\rm asym}(s_1,s_2)=\frac12(s_2^2-s_1^2)\Ai(s_1+s_2)$.
\end{prop}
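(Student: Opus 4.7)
The plan is to carry out the same kind of steepest-descent asymptotics used above for flat PNG and PNG droplet, expanded to the next order in $t^{-1/3}$. After substituting $x_i = X(s_i) = -2n + t/2 - s_i t^{1/3} - a$ and absorbing the conjugation prefactor, the rescaled kernel becomes
\begin{equation*}
K_{t,\rm resc}^{\rm flatTASEP}(s_1,s_2) = \frac{-t^{1/3}}{2\pi\I}\, e^{-(s_2-s_1) t^{1/3} \ln 2} \oint_{\Gamma_1}\dx z\, \exp\bigl(t f(z) + r_t(z;s_1,s_2)\bigr),
\end{equation*}
where the leading phase $f(z)=(1-2z)+\tfrac12\ln z - \tfrac12\ln(1-z)$ has a double critical point at $z_c=1/2$, with $f(1/2)=f'(1/2)=f''(1/2)=0$ and $f'''(1/2)=16\neq 0$. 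Since the integrand has only singularities at $z=0$ and $z=1$, $\Gamma_1$ can be deformed to pass through $z_c$ along a steepest-descent path whose two arms leave $z_c$ at angles $\pm \pi/3$ (the directions in which $\Re(f'''(z_c)(z-z_c)^3)<0$), closed far from $z_c$ by an arc on which the exponent is uniformly smaller, as in the PNG proofs.

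Now set $z = 1/2 + Z/(2t^{1/3})$, chosen so that the cubic term becomes exactly $Z^3/3$; the Jacobian $dz = dZ/(2t^{1/3})$ cancels with the rescaling prefactor $t^{1/3}$ to give a factor $1/2$. Expanding $\ln z$ and $\ln(1-z)$ in powers of $Z/t^{1/3}$ and collecting the exponent by powers of $t$, the $\Or(t^{2/3})$ linear and $\Or(t^{1/3})$ quadratic terms cancel by the saddle conditions, and the order-$1$ contribution is $Z^3/3 - (s_1+s_2)Z + \ln 2$ (where the divergent $(s_2-s_1)t^{1/3}\ln 2$ from the expansion is killed by the conjugation factor, and the remaining $\ln 2$ produces a factor $2$ that exactly cancels the Jacobian $1/2$). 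At order $t^{-1/3}$, the $Z^4$ contributions from $(t/2)\ln z$ and $-(t/2)\ln(1-z)$ cancel in pairs, and the remaining contributions collapse to
\begin{equation*}
(1-2a)\,\frac{Z}{t^{1/3}} + \frac{s_2-s_1}{2}\,\frac{Z^2}{t^{1/3}}.
\end{equation*}
Choosing $a=1/2$ kills the linear-in-$Z$ correction, as in the PNG droplet case. Exponentiating, expanding to first order in $t^{-1/3}$, and invoking $\Ai(x)=\frac{1}{2\pi\I}\int e^{Z^3/3-xZ}\dx Z$ together with the identity $\Ai''(x)=x\Ai(x)$, the $Z^2$ insertion becomes $\Ai''(s_1+s_2) = (s_1+s_2)\Ai(s_1+s_2)$, so
\begin{equation*}
K_{t,\rm resc}^{\rm flatTASEP}(s_1,s_2) = \Ai(s_1+s_2) + \frac{s_2-s_1}{2\,t^{1/3}}\,(s_1+s_2)\Ai(s_1+s_2) + \Or(t^{-2/3}),
\end{equation*}
which matches $K_{{\cal A}_1}(s_1,s_2) + t^{-1/3}K_{\rm asym}(s_1,s_2)$ with $K_{\rm asym}(s_1,s_2) = \tfrac12(s_2^2-s_1^2)\Ai(s_1+s_2)$.

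The main technical nuisance will be the bookkeeping: the order-$t^{-1/3}$ coefficient in the exponent receives contributions from four sources ($(t/2)a_4$, $(-s_i t^{1/3})a_2$, $(-a)a_1$, and the extra $+1$ inside $(2n+x_1+1)$), and one must verify they combine to precisely the displayed expression so that the cancellation by $a=1/2$ is consistent with the scaling used in \eqref{eq3.37}. A secondary issue, common to all these saddle analyses, is controlling the tail of the contour outside a $t^{-1/3}$-neighborhood of $z_c$; as in the flat PNG case this is done with the exponential bounds developed in Appendix A.2 of \cite{Fer04}, and the orientation of the deformed $z$-contour, upon the linear change of variables, traverses the Airy contour in the direction opposite to the conventional one, which combines with the overall $-1/(2\pi\I)$ prefactor to yield the correct positive sign for $K_{{\cal A}_1}$ and for $K_{\rm asym}$.
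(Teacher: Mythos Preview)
Your proof is correct and follows essentially the same steepest-descent approach as the paper: the same critical point $z_c=1/2$, the same change of variables $z=1/2+Z/(2t^{1/3})$, the same order-$t^{-1/3}$ coefficient $(1-2a)Z+\tfrac{s_2-s_1}{2}Z^2$ leading to the choice $a=1/2$, and the same use of $\Ai''=x\Ai$ to simplify. Your write-up contains somewhat more explicit bookkeeping (the $Z^4$ cancellation, the role of the conjugation factor and Jacobian, and the sign from the contour orientation), but there is no substantive difference in method.
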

\begin{proof}
The rescaled kernel reads
\begin{equation}\label{eqRescFlatTASEP}
K_{t,\rm resc}^{\rm flatTASEP}(s_1,s_2)=-\frac{t^{1/3}}{2 \pi\I}\frac{2^{s_1 t^{1/3}}}{2^{s_2 t^{1/3}}} \oint_{\Gamma_1} \dx z \, e^{t(1-2z)} \frac{z^{t/2-s_2 t^{1/3}-a}}{(1-z)^{t/2-s_1 t^{1/3}-a+1}}
\end{equation}
The function $z \mapsto 1 -2z + \frac{1}{2} \ln \frac{z}{1-z}$ has a double critical point at $z_c=1/2$. We choose as steep descent path the one coming into $z_c$ with angle $e^{\I\pi/3}$, leaving with angle $e^{-\I\pi/3}$, and continued by a piece of a circle around $1$ with radius $1/2$. Setting $Z=t^{1/3}(2z-1)$, we get
\begin{multline}
K_{t,\rm resc}^{\rm flatTASEP}(s_1,s_2) = \frac{1}{2\pi\I} \int_{\infty e^{-\I\pi/3}}^{\infty e^{\I\pi/3}} \dx Z \, e^{Z^3/3 - (s_1+s_2) Z} \\
\times \left(1+ t^{-1/3} \left((1-2a)Z+\frac{s_2-s_1}{2} Z^2\right) + \Or(t^{-2/3})\right).
\end{multline}
Thus we see that in order to make the first order correction antisymmetric we need to choose $a=1/2$. With this choice,
\begin{equation}
K_{t,\rm resc}^{\rm flatTASEP}(s_1,s_2)= \Ai(s_1+s_2)+\tfrac12(s_2-s_1)\Ai''(s_1+s_2)t^{-1/3} + \Or(t^{-2/3})
\end{equation}
and the statement follows using $\Ai''(s_1+s_2)=(s_1+s_2)\Ai(s_1+s_2)$.
\end{proof}

\subsection{TASEP with step initial condition}
Now consider TASEP with step initial condition, $x_k(0)=-k$, $k=1,2,\ldots$ The joint distribution of particle positions for this initial condition can be found for example (as special case) in~\cite{BF07}. For one particle, we have\footnote{The formula for the one-point distribution can be also given by a last passage percolation model, which can be analyzed by determinantal line ensembles leading to the Laguerre kernel~\cite{Jo00b}. Joint distributions for the related last passage model can be determined via Schur process~\cite{Ok01,Jo00b,Jo03b}.}
\begin{equation}
\Pb(x_{n}(t)\geq X)=\det(\Id-K_{t,n}^{\rm stepTASEP})_{\ell^2(\{\ldots,X-2,X-1\})}
\end{equation}
with
\begin{equation}
K_{t,n}^{\rm stepTASEP}(x_1,x_2)=\frac{1}{(2\pi\I)^2} \oint_{\Gamma_0}\dx z \oint_{\Gamma_1} \dx w \, \frac{e^{tz}}{e^{tw}} \left(\frac{1-z}{1-w}\right)^{n} \frac{w^{n+x_2}}{z^{n+x_1+1}}\frac{1}{z-w}
\end{equation}
Now consider a particle that at time $t$ is in the ``rarefaction fan'', i.e., it is in the region with decreasing density strictly between $0$ and $1$. Such particles have a particle number $n=\sigma t$ for some $\sigma \in (0,1)$. Thus, we define for a couple $(n,t)$ the value of $\sigma:=n/t$ and we assume that this value for large $n$ and $t$ is clearly away both from $0$ and $1$. Then, the scaling for $t\to\infty$ is given by
\begin{equation}
n=\sigma t\in \N,\quad X(s)=-n+(1-\sqrt{\sigma })^2t-s c_2 t^{1/3}-a\in \Z,
\end{equation}
so that
\begin{equation}
I_t=(\Z-(1-\sqrt{\sigma})^2 t+a) c_2^{-1} t^{-1/3}
\end{equation}
with $c_2 = \sigma^{-1/6}(1-\sqrt{\sigma})^{2/3}$, and for a $t$-independent constant $a$ to be specified later. It is also known that~\cite{BF07}
\begin{multline}
K_{t,\sigma,\rm resc}^{\rm stepTASEP}(s_1,s_2):= c_2^{-1}t^{1/3} (1-\sqrt{\sigma})^{X(s_1)-X(s_2)} K_{t,\sigma t}^{\rm stepTASEP}(X(s_1),X(s_2))\\
\to K_{{\cal A}_2}(s_1,s_2)
\end{multline}
as $t\to\infty$ and uniformly for $s_1,s_2$ in bounded sets. Moreover, exponential bounds for the decay of $K_{t,\rm resc}^{\rm stepTASEP}$ ensure that
\begin{equation}
\lim_{t\to\infty}\Pb(x_n(t)\geq X(s))=\det(\Id-K_{{\cal A}_2})_{L^2((s,\infty))} = F_{\rm GUE}(s).
\end{equation}

Now let us focus on the first order correction.
\begin{prop}
Uniformly for $s_1,s_2$ in a bounded set, with the choice $a=1/2$, it holds
\begin{equation}\label{eq59}
K_{t,\sigma,\rm resc}^{\rm stepTASEP}(s_1,s_2)=K_{{\cal A}_2}(s_1,s_2)+c_2^{-1}t^{-1/3}K_{\rm asym}(s_1,s_2)+\Or(t^{-2/3}),
\end{equation}
where $K_{\rm asym}(s_1,s_2)=P(s_1,s_2)-P(s_2,s_1)$, with
\begin{multline}
P(s_1,s_2)= \frac{1}{2} \int_{\R_+} \dx \lambda\, \Ai(s_1+\lambda) \\
\times \left(\Ai'(s_2+\lambda) + s_2 \Ai''(s_2+\lambda) - \frac{1-2 \sqrt{\sigma}}{2 \sqrt{\sigma}}\Ai^{(4)}(s_2+\lambda)\right).
\end{multline}
\end{prop}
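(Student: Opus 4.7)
\medskip
The strategy parallels the steep descent argument for the PNG droplet. The phase function relevant for the $z$-integration is
\begin{equation*}
f(z):=z+\sigma\log(1-z)-(1-\sqrt{\sigma})^2\log z,
\end{equation*}
which a direct computation shows has $f'(z_c)=f''(z_c)=0$ and $f'''(z_c)=-2/(\sqrt{\sigma}(1-\sqrt{\sigma}))\ne 0$ at $z_c:=1-\sqrt{\sigma}$; the $w$-exponent carries $-tf(w)$ at leading order, so $w=z_c$ is also a double critical point. I would first deform the $z$-contour to a steep descent path passing through $z_c$ (entering at angle $-2\pi/3$ and leaving at $+2\pi/3$, closed by a large arc around $0$), and deform the $w$-contour to one entering at $+\pi/3$ and leaving at $-\pi/3$, closed by a small arc inside the circle $|w|=z_c$. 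The two contours are arranged to be separated within an $\Or(t^{-1/3})$-neighborhood of $z_c$ so that $1/(z-w)$ stays bounded, and standard exponential bounds (as in~\cite{BF07}) ensure that the contribution from outside this neighborhood is negligible.

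Second, I would introduce the local rescaling $z=z_c+\beta t^{-1/3}Z$, $w=z_c+\beta t^{-1/3}W$ with $\beta:=\sigma^{1/6}(1-\sqrt{\sigma})^{1/3}$, chosen so that $c_2\beta=z_c$. This normalizes the cubic Taylor parts of $\pm tf$ to $\mp Z^3/3$ and $\pm W^3/3$ respectively, and makes the sub-cubic pieces of the factors $z^{\mp s_ic_2t^{1/3}}$ (after absorbing the conjugation prefactor $(1-\sqrt{\sigma})^{X(s_1)-X(s_2)}$) contribute the linear terms in $Z$ and $W$. Expanding the integrand to order $t^{-1/3}$, the leading term reproduces the standard double contour representation of $K_{{\cal A}_2}(s_1,s_2)$; the $t^{-1/3}$ correction to the integrand is a polynomial in $Z,W$ collecting the quartic Taylor piece $\tfrac{\beta^4f^{(4)}(z_c)}{24}(Z^4-W^4)$, the next pieces of the logarithms (giving $s_1Z^2$ and $s_2W^2$ terms with coefficient $1/(2c_2)$), linear pieces in $Z,W$ whose coefficients depend on $a$, plus constant contributions from the Jacobian and the off-by-one $+1$ in $(n+x_1+1)$.

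Third, I would use $1/(Z-W)=-\int_0^\infty e^{\lambda(Z-W)}\,\dx\lambda$ (valid on our contours since $\Re(Z-W)<0$) together with the contour representation of the Airy function to convert each monomial $Z^k$ in the $Z$-integral into $\Ai^{(k)}(s_1+\lambda)$, and each $W^k$ in the $W$-integral into $-\Ai^{(k)}(s_2+\lambda)$. The quartic coefficient simplifies, using $\beta^4=\sigma^{2/3}(1-\sqrt{\sigma})^{4/3}$ and $f^{(4)}(z_c)=6(2\sqrt{\sigma}-1)/[\sigma(1-\sqrt{\sigma})^2]$, to $\tfrac{\beta^4f^{(4)}(z_c)c_2}{24}=\tfrac{2\sqrt{\sigma}-1}{4\sqrt{\sigma}}$, which (after the overall $\tfrac12$ in $P$) reproduces the stated coefficient $-\tfrac{1-2\sqrt{\sigma}}{2\sqrt{\sigma}}$ of $\Ai^{(4)}$ in $P$; the $Z^2,W^2$ pieces produce the $s\,\Ai''$ terms, and the linear pieces produce the $\Ai'$ term together with a residual symmetric-in-$(s_1,s_2)$ contribution proportional to $(1-2a)$. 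Setting $a=\tfrac12$ kills the symmetric piece, leaving the stated $c_2^{-1}t^{-1/3}\bigl[P(s_1,s_2)-P(s_2,s_1)\bigr]$.

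The main obstacle is the meticulous bookkeeping of the first-order contributions: beyond the quartic Taylor piece and the sub-sub-leading pieces of the two logarithms, one must track the Jacobian $\dx z/\dx Z$, the off-by-one $+1$ in $(n+x_1+1)$, the explicit shift $a$, and the conjugation prefactor, and verify that they combine into exactly a symmetric piece linear in $(1-2a)$ plus a genuinely antisymmetric piece. Recognizing the antisymmetric piece as $P(s_1,s_2)-P(s_2,s_1)$ in the form given uses integration by parts in $\lambda$ and the Airy ODE $\Ai''(x)=x\Ai(x)$, in the same fashion as for the PNG droplet.
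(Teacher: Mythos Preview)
Your approach is essentially the paper's: same double critical point $z_c=1-\sqrt\sigma$, same local rescaling (your $\beta$ is the paper's $c_3^{-1}$), same first-order expansion producing linear, quadratic and quartic pieces, and the same observation that $a=\tfrac12$ removes the symmetric part of the linear contribution. Two small slips to correct: the $w$-contour is $\Gamma_1$ and must enclose the order-$n$ pole at $w=1$ coming from $(1-w)^{-n}$, so it is closed by an arc of radius greater than $1$, not by a small arc inside $|w|=z_c$; and at the end no integration by parts or Airy ODE is needed---unlike the PNG droplet case, the form $P(s_1,s_2)-P(s_2,s_1)$ falls out directly once the monomials are converted (and note the conversion is $W^k\mapsto(-1)^k\Ai^{(k)}$, not $-\Ai^{(k)}$, which matters for the even-degree terms).
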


\begin{figure}
\begin{center}
\psfrag{S}{$s$}
\psfrag{F}{}
\psfrag{1.0}[cr][][0.8][270]{$1.0$}
\psfrag{0.8}[cr][][0.8][270]{$0.8$}
\psfrag{0.6}[cr][][0.8][270]{$0.6$}
\psfrag{0.4}[cr][][0.8][270]{$0.4$}
\psfrag{0.2}[cr][][0.8][270]{$0.2$}
\psfrag{0.0}[cr][][0.8][270]{$0.0$}
\psfrag{1}[][][0.8]{$1$}
\psfrag{0}[][][0.8]{$0$}
\psfrag{-1}[][][0.8]{$-1\hspace{0.4em}$}
\psfrag{-2}[][][0.8]{$-2\hspace{0.4em}$}
\psfrag{-3}[][][0.8]{$-3\hspace{0.4em}$}
\psfrag{-4}[][][0.8]{$-4\hspace{0.4em}$}
\includegraphics[height=10cm,angle=-90]{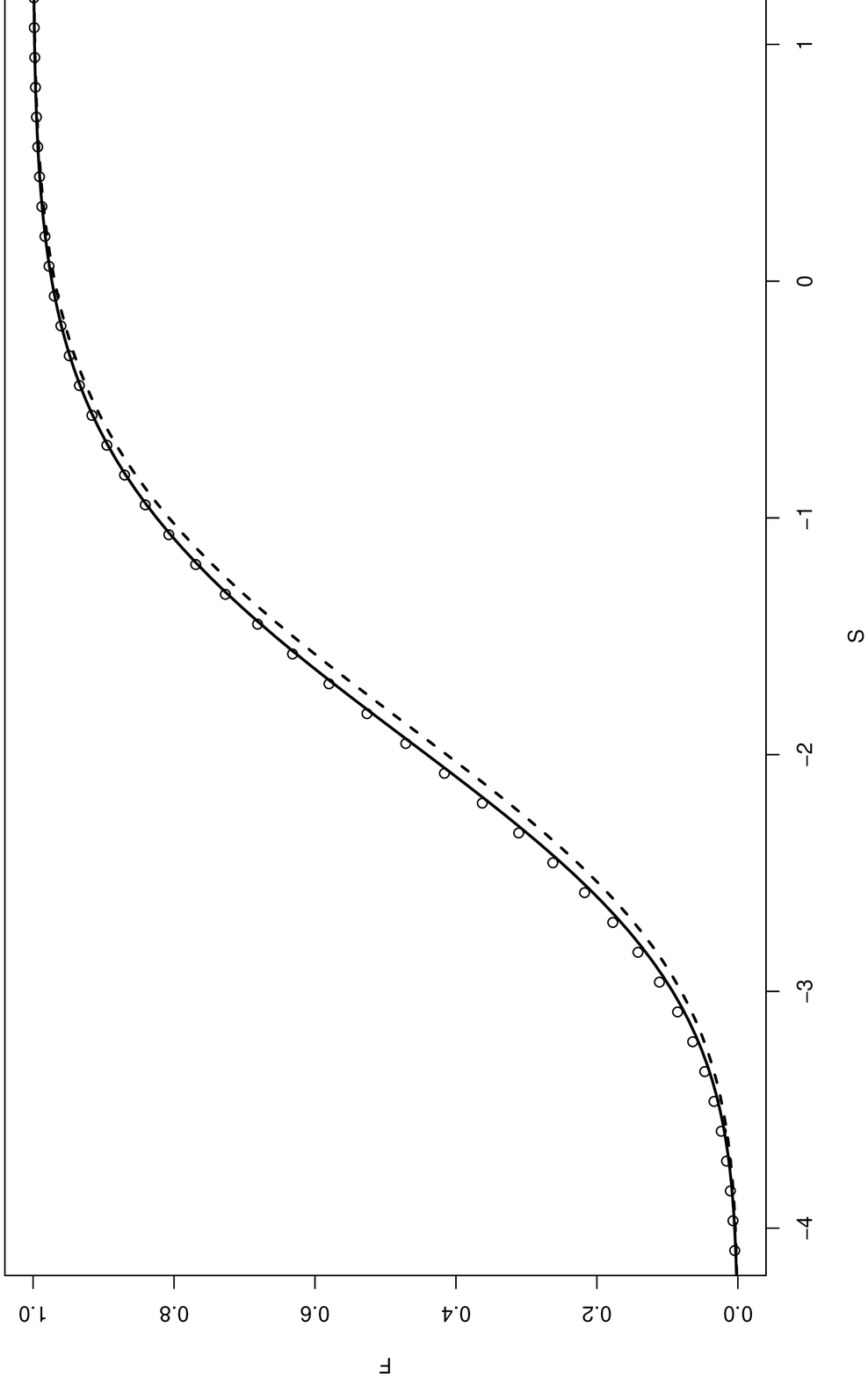}
\caption{Distribution function of $x_{n=[t/4]}(t)$ for TASEP with step initial conditions and $t=1000$. The number of runs is $10^6$. The dots are the plot of \mbox{$(s\in I_t,\Pb(x_{[t/4]}(t)\geq X(s)))$} with $\delta_t=(t/2)^{-1/3}$ and $a=1/2$. The solid line is \mbox{$(s,F_{\rm GUE}(s+\tfrac12\delta_t))$} while the dashed line is \mbox{$(s,F_{\rm GUE}(s+\tfrac12\delta_t-a\delta_t)))$}.}
\label{FigStepCDF}
\end{center}
\end{figure}

\begin{figure}
\begin{center}
\psfrag{S}{$s$}
\psfrag{P}{}
\psfrag{0.4}[cr][][0.8][270]{$0.4$}
\psfrag{0.3}[cr][][0.8][270]{$0.3$}
\psfrag{0.2}[cr][][0.8][270]{$0.2$}
\psfrag{0.1}[cr][][0.8][270]{$0.1$}
\psfrag{0.0}[cr][][0.8][270]{$0.0$}
\psfrag{1}[][][0.8]{$1$}
\psfrag{0}[][][0.8]{$0$}
\psfrag{-1}[][][0.8]{$-1\hspace{0.4em}$}
\psfrag{-2}[][][0.8]{$-2\hspace{0.4em}$}
\psfrag{-3}[][][0.8]{$-3\hspace{0.4em}$}
\psfrag{-4}[][][0.8]{$-4\hspace{0.4em}$}
\includegraphics[height=10cm,angle=-90]{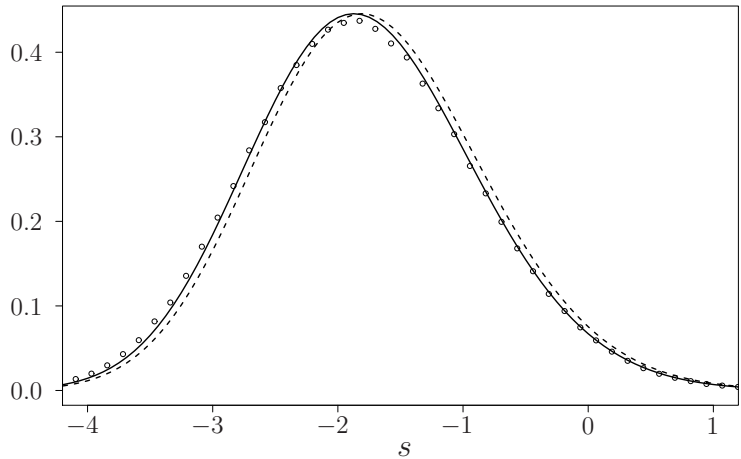}
\caption{Probabilities for $x_{[t/4]}(t)$ for TASEP with step initial conditions and $t=1000$. The number of runs is $10^6$.  The dots are the plot of \mbox{$(s\in I_t,\Pb(x_{[t/4]}(t)= X(s)) \delta_t^{-1})$} with $\delta_t=(t/2)^{-1/3}$ and $a=1/2$. The solid line is \mbox{$(s,F'_2(s))$} while the dashed line is \mbox{$(s,F'_2(s-a\delta_t))$}.}
\label{FigStepDensity}
\end{center}
\end{figure}

\begin{proof}
With $c_1 = 1-2\sqrt{\sigma}$ we can write $X(s)=c_1 t -c_2 st^{1/3}-a$. The rescaled kernel then reads
\begin{multline}
K_{t,\sigma,\rm resc}^{\rm stepTASEP}(s_1,s_2)=\frac{c_2 t^{1/3} (1-\sqrt{\sigma})^{c_2(s_1-s_2)t^{1/3}}}{(2\pi\I)^2}\\
\times \oint_{\Gamma_0}\dx z \oint_{\Gamma_1} \dx w \, \frac{e^{tz}}{e^{tw}} \left(\frac{1-z}{1-w}\right)^{\sigma t} \frac{w^{(\sigma + c_1) t - c_2 s_2 t^{1/3}-a}}{z^{(\sigma+c_1)t-c_2s_1t^{1/3}-a+1}}\frac{1}{z-w}.
\end{multline}
The function $z \mapsto z + \sigma \ln(1-z)-(\sigma+c_1)\ln z$ has a double critical point at $\xi=1-\sqrt{\sigma}$. The steepest descent path for $z$ can be taken such that it comes into $\xi$ with an angle $e^{-2\pi\I/3}$, leaves with an angle $e^{2\pi\I/3}$, and is completed by a piece of a circle around zero of radius strictly larger than $\xi$. The steepest descent path for $w$ comes into $\xi$ with an angle $e^{\pi\I/3}$, leaves it with an angle $e^{-\pi\I/3}$, and is completed by a piece of a circle around zero of radius strictly larger than $1$. By the change of variables
\begin{equation}
z = \xi + c_3^{-1} t^{-1/3} Z, \quad w = \xi + c_3^{-1} t^{-1/3} W
\end{equation}
with $c_3= \sigma^{-1/6}(1-\sqrt{\sigma})^{-1/3}$ and a large $t$ expansion of the integrand, we have
\begin{multline}
K_{t,\sigma,\rm resc}^{\rm stepTASEP}(s_1,s_2)= \frac{1}{(2\pi\I)^2} \int_{\infty e^{-2\pi\I/3}}^{\infty e^{2 \pi\I/3}}\dx Z \int_{\infty e^{-\pi \I/3}}^{\infty e^{\pi \I/3}}\dx W \, \frac{e^{W^3/3 - s_2W}}{e^{Z^3/3-s_1Z}} \frac{1}{W-Z} \\
\times \biggl( 1+  c_2^{-1}t^{-1/3} \biggl((a-1) Z-aW + \frac{s_2W^2-s_1 Z^2}{2} + c_4(Z^4-W^4) \biggr)+\Or(t^{-2/3})\biggr),
\end{multline}
with $c_4=(1-2\sqrt{\sigma})/(4\sqrt{\sigma})$. The choice $a=1/2$ makes the first order correction of the kernel antisymmetric. Finally, we can choose the paths satisfying $\Re Z < \Re W$ and use $\frac{1}{W-Z} = \int_0^\infty \dx\lambda\, e^{-\lambda(W-Z)}$ to obtain (\ref{eq59}).
\end{proof}

\begin{remark}\label{RemFlatvsStep}
Looking at Figures~\ref{FigFlatCDF}--\ref{FigStepDensity} one has the impression that TASEP with alternating initial conditions is already ``closer'' than with step initial conditions to its asymptotics at time $t=1000$, which is confirmed by the data in Table~\ref{Table}. This is to remind the reader that although in both cases the error is $\Or(t^{-2/3})$, depending on the prefactor one still might see some differences of the accuracy for not too large times $t$. The slower convergence for curved vs.\ flat geometry holds also for the PNG model as verified numerically by Richter in his diploma thesis~\cite{Ric11} adapting the numerical approach of Borneman~\cite{Born08}.
\begin{table}
\begin{center}
\begin{tabular}{|p{8em}|l|l|l|l|}
  \hline
  TASEP, $t=1000$ & Mean & Variance & Skewness & Kurtosis \\
  \hline
  & & & & \\[-1em]
  \hline
  Alternating IC & $-0.60495$ & \phantom{$-$}$0.4027$ & \phantom{$-$}$0.282$ & \phantom{$-$}$0.143$ \\
  \hline
  $s\mapsto F_{\rm GOE}(2s)$ & $-0.60327$ & \phantom{$-$}$0.4019(5)$ & \phantom{$-$}$0.293$ & \phantom{$-$}0.165 \\
  \hline
  Relative error & \phantom{$-$}$0.28$ \% & \phantom{$-$}$0.18$ \% & $-3.6$ \% & $-13$ \% \\
  \hline
  & & & & \\[-1em]
  \hline
  Step IC & $-1.7949(7)$ & \phantom{$-$}$0.842(2)$  & \phantom{$-$}$0.19(0)$  & \phantom{$-$}$0.06(7)$ \\
  \hline
  $s\mapsto F_{\rm GUE}(s)$ & $-1.77109$ & \phantom{$-$}$0.8132$ & \phantom{$-$}$0.224$ & \phantom{$-$}$0.094$ \\
  \hline
  Relative error & \phantom{$-$}$1.3$ \% & \phantom{$-$}$3.6$ \% & $-15$ \% & $-29$ \% \\
  \hline
\end{tabular}
\caption{Comparison between alternating and step initial conditions. The data comes from the simulation used for the previous figures.}
\label{Table}
\end{center}
\end{table}

\end{remark}

\section{PASEP}\label{SectPASEP}
Consider the partially asymmetric simple exclusion process on $\Z$ in continuous time with step initial condition. A formula for the one-point distribution of the $n$th particle from the right has been derived in~\cite{TW07,TW08}. The expression is this time not just a Fredholm determinant, but an integral in the complex plane of a Fredholm determinant. A rigorous large time asymptotic analysis is in~\cite{TW08b}, in which it is shown that particles in the rarefaction fan fluctuate asymptotically according to the GUE Tracy-Widom distribution $F_{\rm GUE}$. The scaling limit to be considered is\footnote{There is a minor difference with respect to the papers of Tracy and Widom. To get their framework we need to apply the transformation $x\to -x$.}
\begin{equation}
n=\sigma t\in \Z,\quad X(s)=c_1(\sigma) t-s c_2(\sigma) t^{1/3}-a\in \Z
\end{equation}
where
\begin{equation}
c_1(\sigma)=1-2\sqrt{\sigma},\quad c_2(\sigma)=\sigma^{-1/6} (1-\sqrt{\sigma})^{2/3}.
\end{equation}
The $t$-independent constant $a$ will be specified later. Then in~\cite{TW08b} it is proven that
\begin{equation}
\lim_{t\to\infty} \Pb(x_n(t/\gamma)\geq X(s))=F_{\rm GUE}(s)\quad \textrm{with}\quad \gamma=p-q>0.
\end{equation}
Our result on the first order correction is the following.
\begin{prop} \label{PropPASEP}
Let $p\in (\frac{1}{2},1]$, $q=1-p$, and set
\begin{equation}
a_{p,q}=\sum_{\ell=1}^\infty \frac{q^\ell}{p^\ell-q^\ell}\quad \textrm{and}\quad a=\frac12-\frac{1}{\sqrt{\sigma}}a_{p,q}.
\end{equation}
Then for large time $t$ it holds
\begin{equation}\label{eq33}
\Pb\left(x_n(t/\gamma)\geq X(s)\right)=F_{\rm GUE}(s+\tfrac12 \delta_t)(1+\Or(t^{-2/3})),
\end{equation}
for $s$ in a bounded subset of $I_t=(\Z-c_1(\sigma)t+a)\delta_t$ with $\delta_t=c_2(\sigma)^{-1}t^{-1/3}$.
\end{prop}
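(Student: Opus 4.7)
The plan is to mimic the strategy used for TASEP step initial condition, but now starting from the Tracy--Widom integral representation of the PASEP one-point distribution in~\cite{TW07,TW08}. Recall that, unlike TASEP, the $n$-th particle distribution is expressed as a contour integral (in an auxiliary variable $\mu$) of a Fredholm determinant whose kernel is a double contour integral involving infinite $\tau$-products with $\tau=q/p$. After substituting $X=X(s)$ with the scaling in the statement, I rescale the two inner integration variables around the appropriate double critical point $\xi=\xi(\sigma)$ of the phase function $z\mapsto z+\sigma\log(1-z)-(\sigma+c_1(\sigma))\log z$, exactly as in the step TASEP case. This produces a rescaled kernel $K^{\rm PASEP}_{t,\sigma,\rm resc}(s_1,s_2)$ on $I_t$.

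Next I carry out the steep descent asymptotics one order beyond the analysis of~\cite{TW08b}. The leading term reproduces $K_{{\cal A}_2}(s_1,s_2)$ and hence $F_{\rm GUE}(s)$. The first order correction of size $t^{-1/3}$ receives three kinds of contributions: (i) higher Taylor terms in the cubic expansion of the exponent and the explicit $(1-\sqrt\sigma)^{s_1-s_2}$-type conjugation factors, which produces terms identical in form to the step TASEP case and depend linearly on the unknown shift $a$; (ii) contributions from the $\mu$-integration, which, after a saddle point/residue evaluation, localizes on the appropriate pole and leaves a remainder involving only the $\tau$-products; (iii) expansion of the $\tau$-product factors $\prod_{k\ge 0}(1-\tau^k\cdots)$ evaluated at the critical point. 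Taking logarithmic derivatives and expanding $\log(1-\tau^k z)=-\sum_\ell \tau^{k\ell}z^\ell/\ell$ then summing over $k$ produces precisely the series $\sum_{\ell\ge 1}\tau^\ell/(1-\tau^\ell)=\sum_{\ell\ge 1} q^\ell/(p^\ell-q^\ell)=a_{p,q}$ as a $\sigma$-independent additive contribution to the symmetric part of the first-order term.

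Collecting all $t^{-1/3}$ contributions, they decompose as usual into a symmetric and an antisymmetric part in $(s_1,s_2)$. The symmetric part is a multiple of $\bigl(a-\tfrac12+a_{p,q}/\sqrt\sigma\bigr)\bigl(\Ai(s_1)\Ai(s_2)+\text{derivatives}\bigr)$ (modulo terms that are already antisymmetric). Choosing $a=\tfrac12-a_{p,q}/\sqrt\sigma$ exactly as in the statement kills the symmetric part, leaving
\begin{equation}
K^{\rm PASEP}_{t,\sigma,\rm resc}(s_1,s_2)=K_{{\cal A}_2}(s_1,s_2)+\delta_t K_{\rm asym}(s_1,s_2)+\Or(\delta_t^2).
\end{equation}
The argument of Section~\ref{SectDiscreteness} then applies: the antisymmetric kernel gives a vanishing trace contribution, so the discrete Fredholm determinant equals $\det(\Id-K_{{\cal A}_2})$ on the shifted lattice up to $\Or(\delta_t^2)$. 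Finally Corollary~\ref{CorDistrVSdensity} converts the discrete Fredholm determinant into $F_{\rm GUE}(s+\tfrac12\delta_t)$ with the same order of error, which yields~\eqref{eq33}.

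The main obstacle is step (iii) together with the $\mu$-integration: the PASEP formula is not itself a Fredholm determinant but an integral of one against an infinite-product measure, so tracking first-order corrections requires a careful and bookkeeping-intensive expansion of both the $\tau$-products inside the kernel and those in the outer $\mu$-integrand, and showing that their combined contribution to the symmetric part reduces to the single additive scalar $a_{p,q}/\sqrt\sigma$. Subsidiary technical issues are verifying that the $\Or(t^{-2/3})$ remainder in the rescaled kernel is trace-class uniformly in $t$ and has an integrable decay analogous to~\eqref{eq6}; these follow from the exponential bounds already used in~\cite{TW08b} for the leading-order analysis.
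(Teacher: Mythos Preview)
Your overall strategy matches the paper's: steepest descent on the inner contour integrals, decompose the $t^{-1/3}$ correction into symmetric and antisymmetric parts, choose $a$ to kill the symmetric part, then invoke Section~\ref{SectDiscreteness}. Two points, however, are misidentified.

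First, a minor slip: the phase you write, $z\mapsto z+\sigma\log(1-z)-(\sigma+c_1)\log z$, is the TASEP phase. The Tracy--Widom kernel $J_\mu$ used here (from~\cite{TW09}) carries the phase $\zeta\mapsto \zeta/(1-\zeta)+\sigma\ln\zeta+c_1\ln(1-\zeta)$, with double critical point $\xi=-\sqrt\sigma/(1-\sqrt\sigma)$.

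Second, and more substantively, your step~(iii) misdescribes where $a_{p,q}$ comes from. The $\tau$-dependence inside $J_\mu$ is \emph{not} through infinite products $\prod_k(1-\tau^k\cdots)$ to be Taylor-expanded at the saddle; it enters through the series $f(\mu,z)=\sum_{k\in\Z}\tau^k z^k/(1-\tau^k\mu)$ evaluated at $z=\zeta/\eta'$. The key step in the paper is the expansion $\mu f(\mu,z)=1/(1-z)+g(\mu)+\Or(|z-1|)$ near $z=1$, with $g(\mu)=\sum_{k\ge 0}\mu\tau^k/(1-\mu\tau^k)+\sum_{k\ge 1}\tau^k/(\tau^k-\mu)$. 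This $g(\mu)$ appears as a $\mu$-dependent coefficient in the symmetric first-order piece of the kernel, proportional to $\bigl(g(\mu)/\sqrt\sigma-\tfrac12+a\bigr)\Ai(x)\Ai(y)$. Only then does the outer $\mu$-integral act: one computes $G=\frac{1}{2\pi\I}\oint\frac{\dx\mu}{\mu}(\mu;\tau)_\infty\,g(\mu)$ by residues at $\mu=0$ and $\mu=\tau^k$, and shows $G=\sum_{k\ge 1}(1-(\tau^k;\tau)_\infty)=\sum_{\ell\ge 1}\tau^\ell/(1-\tau^\ell)=a_{p,q}$ via $q$-Pochhammer and $q$-hypergeometric identities (the $q$-Gauss sum in particular). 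Your logarithmic-derivative trick does appear, but only at the very end of this chain, applied to $(\alpha\tau;\tau)_\infty$ after the residue computation, not to products sitting in the kernel. So the obstacle you flag is real, but its resolution is the $f\to g(\mu)\to G$ pipeline rather than a direct product expansion at the critical point.
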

\begin{remark}
Note that the previously discussed TASEP with step initial conditions is a special case of this result, with $p=1-q=1$, since $a_{1,0}=0$
\end{remark}
From this result one can easily get the corresponding result for the height function. Let $\eta_x(t)$ be $1$ if there is a particle at site $x$ at time $t$, and zero otherwise. Then the height function is defined by
\begin{equation}
h(x,t)=\left\{
\begin{array}{ll}
2J(0,t)+\sum_{y=0}^{x-1}(1-2\eta_y(t)), & \textrm{ for }x\geq 1, \\
2J(0,t), & \textrm{ for }x=0, \\
2J(0,t)-\sum_{y=x}^{-1} (1-2\eta_y(t)), & \textrm{ for }x\leq -1,
\end{array}
\right.
\end{equation}
where $J(0,t)=\sum_{y\geq 0}\eta_{y}(t)$. To get the result for the $h$ from Proposition~\ref{PropPASEP} one simply uses the identity
\begin{equation}\label{eqHX}
\Pb(x_{j}(t)\geq x)=\Pb(h(x,t)\geq 2j+x)
\end{equation}
with the following result.
\begin{cor}\label{CorShiftPASEP}
Let $x=\xi t\in\Z$ and set
\begin{equation}\label{eq4.8}
H(s)=\frac12(1+\xi^2)t-s \frac{(1-\xi^2)^{2/3}}{2^{1/3}}t^{1/3}+\tilde a,\quad \tilde a=2 a_{p,q}-1.
\end{equation}
Then, for large $t$ it holds
\begin{equation}\label{eq4.9}
\Pb(h(x,t/\gamma)\geq H(s)) = F_{\rm GUE}(s+\tfrac12\delta_t^h)(1+\Or(t^{-2/3})),
\end{equation}
for $s\in I^h_t$, where $I^h_t$ is defined by the requirement $H(s)\in 2\Z$ if $x$ is even and $H(s)\in 2\Z+1$ if $x$ is odd. Since the lattice width of $h$ is $2$, we have $\delta_t^h= 2^{4/3}(1-\xi^2)^{-2/3}t^{-1/3}$.\\
From this result it follows that the critical value $p_c$ of the asymmetry in PASEP such that the density (and the moments) of the rescaled integrated current are correct up to order $\Or(t^{-2/3})$ is the solution of
\begin{equation}
a_{p_c,1-p_c}=\frac12\quad \iff \quad p_c=0.78227\,87862{\ldots}
\end{equation}
\end{cor}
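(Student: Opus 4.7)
The plan is to translate Proposition~\ref{PropPASEP} (a statement about particle positions) into the statement (\ref{eq4.9}) (about the height function) via the duality identity~(\ref{eqHX}). Given $x=\xi t \in \Z$, we set $j(s):=(H(s)-x)/2$, which is an integer precisely by the parity requirement defining $I_t^h$. Using (\ref{eqHX}) we rewrite
\begin{equation}
\Pb(h(x,t/\gamma)\geq H(s))=\Pb(x_{j(s)}(t/\gamma)\geq x).
\end{equation}
The goal is to apply Proposition~\ref{PropPASEP} to the right-hand side with $n=j(s)$, which requires expressing the threshold $x$ in the form $X(\tilde s)=c_1(\sigma)t-\tilde s\, c_2(\sigma)t^{1/3}-a(\sigma)$ with $\sigma=j(s)/t$, and then showing that $\tilde s+\tfrac12 \delta_t^{\rm prop}=s+\tfrac12 \delta_t^h+\Or(t^{-2/3})$, where $\delta_t^{\rm prop}=c_2(\sigma)^{-1}t^{-1/3}$.

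The first step is a direct calculation: substituting the definition (\ref{eq4.8}) of $H(s)$ gives
\begin{equation}
j(s)=\tfrac{(1-\xi)^2}{4}\,t-s\,\tfrac{(1-\xi^2)^{2/3}}{2^{4/3}}\,t^{1/3}+\tfrac{\tilde a}{2},
\end{equation}
so to leading order $\sigma_0:=j(s)/t=(1-\xi)^2/4$ and $\sqrt{\sigma_0}=(1-\xi)/2$, in particular $c_1(\sigma_0)=1-2\sqrt{\sigma_0}=\xi$, which matches $x=c_1(\sigma_0)t$ and shows that we are indeed in the rarefaction fan regime covered by the proposition. Writing $\sigma=\sigma_0+\alpha t^{-2/3}+\beta t^{-1}$ with $\alpha=-s(1-\xi^2)^{2/3}/2^{4/3}$ and $\beta=\tilde a/2$, the next step is to Taylor expand $c_1(\sigma)$, $c_2(\sigma)$ and $a(\sigma)=\tfrac12-a_{p,q}/\sqrt{\sigma}$ around $\sigma_0$, keeping all terms relevant up to $\Or(t^{-2/3})$.

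Solving $X(\tilde s)=x$ for $\tilde s$ and carrying out this expansion yields
\begin{equation}
\tilde s=s-\delta_t^{\rm prop}\Bigl(\tfrac{\tilde a}{2\sqrt{\sigma_0}}+a(\sigma_0)\Bigr)+\Or(t^{-2/3}).
\end{equation}
Inserting $\tilde a=2a_{p,q}-1$ and $a(\sigma_0)=\tfrac12-a_{p,q}/\sqrt{\sigma_0}$ produces a remarkable cancellation of all $a_{p,q}$ contributions, leaving $\tilde a/(2\sqrt{\sigma_0})+a(\sigma_0)=\tfrac12-1/(2\sqrt{\sigma_0})$, and hence
\begin{equation}
\tilde s+\tfrac12\delta_t^{\rm prop}=s+\tfrac{\delta_t^{\rm prop}}{2\sqrt{\sigma_0}}+\Or(t^{-2/3}).
\end{equation}
A short algebraic check, using $c_2(\sigma_0)=(1+\xi)^{2/3}/(2^{1/3}(1-\xi)^{1/3})$ and $\sqrt{\sigma_0}=(1-\xi)/2$, shows that $\delta_t^{\rm prop}/(2\sqrt{\sigma_0})=\tfrac12\delta_t^h$, giving $\tilde s+\tfrac12\delta_t^{\rm prop}=s+\tfrac12\delta_t^h+\Or(t^{-2/3})$. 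Applying Proposition~\ref{PropPASEP} then establishes (\ref{eq4.9}).

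The value of $p_c$ comes out as a byproduct: the role of $\tilde a$ is exactly the non-universal order-one shift $\eta$ of Result~\ref{claim1}, so the density and moments of the rescaled current converge at rate $\Or(t^{-2/3})$ without any further recentering iff $\tilde a=2a_{p,q}-1=0$, that is $a_{p,q}=\tfrac12$. The main obstacle is not conceptual but computational: several $\Or(t^{-1/3})$ contributions (from the derivative of $c_1$, from the sub-leading $\beta t^{-1}$ piece of $\sigma$, from $a(\sigma)$, and from the Proposition's intrinsic $\tfrac12\delta_t^{\rm prop}$ shift) must be tracked simultaneously and combined in exactly the right way for the miraculous cancellation yielding $\tfrac12\delta_t^h$ to occur; one also has to verify that Proposition~\ref{PropPASEP} applies uniformly for $\sigma$ in an $\Or(t^{-2/3})$ neighbourhood of $\sigma_0$, which follows from inspection of the steepest-descent bounds underlying the proposition.
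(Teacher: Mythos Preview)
Your approach is essentially the same as the paper's: both use the duality~(\ref{eqHX}) to convert the height event into a particle-position event with label $j(s)=(H(s)-x)/2$, Taylor expand $c_1(\sigma)$, $c_2(\sigma)$, $a(\sigma)$ around $\sigma_0=(1-\xi)^2/4$, and match the resulting $\tilde s$ with the Proposition~\ref{PropPASEP} scaling. The paper organizes this slightly differently---it first shifts $s$ by $\tfrac12\delta_t^h$ and introduces linear interpolations of the distribution functions to avoid lattice issues, then records the key algebraic identity as $\xi t=c_1(\sigma)t-sc_2(\sigma)t^{1/3}-(a-\tfrac12)+\Or(t^{-1/3})$---but the computation and the cancellation you highlight are the same.
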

In Figure~\ref{FigPasep} we plot the function $2 a_{p,1-p}-1$.
\begin{proof}[Proof of Corollary~\ref{CorShiftPASEP}]
Let us define a linearization of the distribution functions by
\begin{equation}
\widetilde F_t(s):=
\begin{cases}
\Pb(x_n(t/\gamma)\geq c_1(\sigma) t-s c_2(\sigma) t^{1/3}-\frac12+\frac{1}{\sqrt{\sigma}}a_{p,q}), & \textrm{if }s\in I_t,\\
\textrm{linear interpolation}, & \textrm{otherwise,}
\end{cases}
\end{equation}
and similarly
\begin{equation}
\widetilde F^h_t(s):=
\begin{cases}
\Pb(h(\xi t,t/\gamma)\geq H(s), & \textrm{if }s\in I^h_t,\\
\textrm{linear interpolation}, & \textrm{otherwise.}
\end{cases}
\end{equation}
Then, Proposition~\ref{PropPASEP} tell us that
\begin{equation}\label{eq4.11}
\widetilde F_t(s-\tfrac12\delta_t)= F_{\rm GUE}(s)(1+\Or(t^{-2/3})),\quad s\in\R,
\end{equation}
and we want to show that
\begin{equation}
\widetilde F^h_t(s-\tfrac12\delta_t^h)= F_{\rm GUE}(s)(1+\Or(t^{-2/3})),\quad \textrm{for }s\in I_t^h+\tfrac12 \delta_t^h.
\end{equation}
For $s\in I_t^h+\tfrac12 \delta_t^h$, using (\ref{eqHX}) we have
\begin{equation}\label{eq4.10}
\begin{aligned}
\widetilde F_t^h(s-\tfrac12\delta_t^h)& =\Pb(h(\xi t,t/\gamma)\geq H(s-\tfrac12 \delta_t^h))\\
&=\Pb(h(\xi t,t/\gamma)\geq H(s)+1)= \Pb(x_{\sigma t}(t/\gamma)\geq \xi t)
\end{aligned}
\end{equation}
with $\sigma=(H(s)+1-\xi t)/(2t)$. With this value of $\sigma$, an algebraic computation gives
\begin{equation}
\xi t=c_1(\sigma) t-s c_2(\sigma) t^{1/3}-(a-1/2)+\Or(t^{-1/3}).
\end{equation}
Since $s c_2(\sigma) t^{1/3}+(a-1/2)=(s-\tfrac12 \delta_t) c_2(\sigma) t^{1/3}+a$ we get that
\begin{equation}
\widetilde F_t^h(s-\tfrac12\delta_t^h)= (\ref{eq4.10}) =\widetilde F_t(s-\tfrac12 \delta_t+\Or(t^{-2/3}))=F_{\rm GUE}(s)(1+\Or(t^{-2/3}))
\end{equation}
where in the last step we used (\ref{eq4.11}) coming from Proposition~\ref{PropPASEP}.
\end{proof}
\begin{remark}\label{remWASEP}
As $p\to 1/2$ the model becomes close to the WASEP studied in~\cite{SS10,SS10a,SS10b,ACQ10}. In particular, our result matches the limit behavior of~\cite{SS10b}. Indeed, when the asymmetry $\beta:=2p-1\to 0$, the shift for the fitting of the density behaves as
\begin{equation}
a_{p,1-p} \simeq \frac{\gamma_{\rm E}-\ln(2\beta)}{2\beta}+\frac14+\Or(\beta),
\end{equation}
with $\gamma_{\rm E}=-\partial_x\ln(\Gamma(x))\big|_{x=1}=-0.57721\,56649\ldots$ the Euler constant, so that for the height function the shift is then $\tilde a=\beta^{-1}(\ln(2\beta)-\gamma_{\rm E})-\tfrac12+\Or(\beta)$.
\end{remark}
\begin{figure}
\begin{center}
\psfrag{p}[b]{$p$}
\psfrag{a}{$2 a_{p,1-p}-1$}
\psfrag{-1}[r][][0.8]{$-1$}
\psfrag{0}[r][][0.8]{$0$}
\psfrag{1}[r][][0.8]{$1$}
\psfrag{2}[r][][0.8]{$2$}
\psfrag{3}[r][][0.8]{$3$}
\psfrag{0.5}[t][][0.8]{$0.5$}
\psfrag{0.6}[t][][0.8]{$0.6$}
\psfrag{0.7}[t][][0.8]{$0.7$}
\psfrag{0.8}[t][][0.8]{$0.8$}
\psfrag{0.9}[t][][0.8]{$0.9$}
\psfrag{1.0}[t][][0.8]{$1.0$}
\includegraphics[height=6cm]{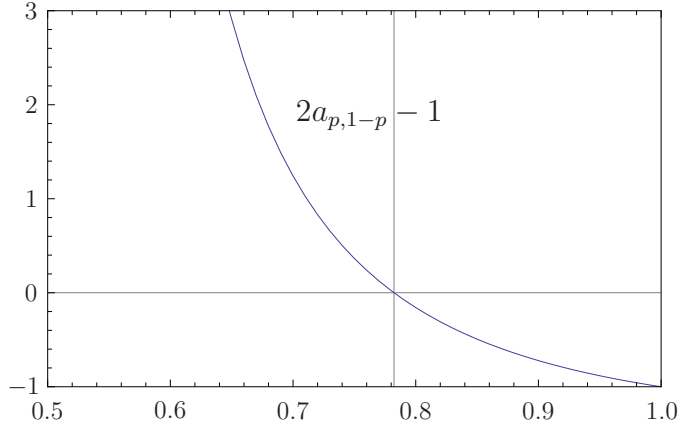}
\caption{The function $p\mapsto 2 a_{p,1-p}-1$ for $p\in (1/2,1]$. It crosses the ordinate axis as $p=p_c=0.78227\,87862{\ldots}$.}
\label{FigPasep}
\end{center}
\end{figure}

\begin{proof}[Proof of Proposition~\ref{PropPASEP}]
As for PNG and TASEP, we indicate the main steps of the asymptotic analysis to get (\ref{eq33}) for $s$ in a bounded set, but we will derive bounds for $|s|\to\infty$ needed to determine moment convergence.
Set $u=c_1 t - c_2 s t^{1/3}- a$ and $\tau=q/p<1$. As shown in~\cite{TW09},
\begin{equation}
\Pb\bigl(x_n(t/\gamma)\geq u\bigr) = \frac{1}{2\pi\I} \oint \frac{\dx\mu}{\mu}\, (\mu;\tau)_\infty \det\bigl(\Id+\mu J_\mu),
\end{equation}
where $(\mu;\tau)_\infty$ is the $q$-Pochhammer symbol (see Appendix~\ref{SectPochhammer} for identities) and the integral is taken over a circle around the origin with radius in the interval $(0,\tau)$. The operator $J_\mu$ has kernel
\begin{equation}
J_\mu(\eta,\eta') = \frac{1}{2\pi\I} \oint \dx \zeta\,\frac{\exp \bigl(\frac{t \zeta}{1-\zeta}\bigr) (1-\zeta)^u \zeta^n}{\exp\bigl(\frac{t \eta'}{1-\eta'}\bigr) (1-\eta')^u (\eta')^{n+1}}\, \frac{f\bigl(\mu,\frac{\zeta}{\eta'}\bigr)}{\zeta-\eta},
\end{equation}
where $\eta,\eta'$ are on a circle around $0$ with radius $r\in(\tau,1)$ and $\zeta$ runs on a circle around $0$ with radius in $(1,r/\tau)$. For $1<\lvert z \rvert < \tau^{-1}$, the function $f$ is given by
\begin{equation}
f(\mu,z)=\sum_{k=-\infty}^\infty \frac{\tau^k}{1-\tau^k \mu} \, z^k,
\end{equation}
which extends analytically to $\C^\ast \setminus \{\tau^k:k\in\Z\}$.

The function $\zeta \mapsto \frac{\zeta}{1-\zeta} + \sigma \ln \zeta + c_1 \ln(1-\zeta)$ has a double critical point at $\xi=-\frac{\sqrt{\sigma}}{1-\sqrt{\sigma}}$, and the steepest descent path can be taken such that the $\eta$-contour for $J_\mu$ is a pair of rays from $\xi$ in the directions $\pm \pi/3$ completed by a circle around zero of radius strictly smaller than $1$, and the $\zeta$-contour is a pair of rays from $\xi-t^{-1/3}$ in the directions $\pm 2\pi/3$ completed by a circle around zero of radius strictly larger than $1$. We then do the transformations
\begin{equation}
\zeta = \xi +  c_3^{-1} t^{-1/3} z, \quad \eta = \xi + c_3^{-1} t^{-1/3}w , \quad \eta'=\xi + c_3^{-1} t^{-1/3} \tilde w,
\end{equation}
with $c_3=\sigma^{-1/6}(1-\sqrt{\sigma})^{5/3}$. Expanding $f$ around $z=1$ yields
\begin{equation}
\mu\,f(\mu,z)= \frac{1}{1-z} + g(\mu) + \Or\bigl(\lvert z-1\rvert\bigr),
\end{equation}
where
\begin{equation}
g(\mu)=\sum_{k=0}^\infty \frac{\mu \tau^k}{1-\mu \tau^k}+\sum_{k=1}^\infty \frac{\tau^k}{\tau^k-\mu}.
\end{equation}
This expansion is obtained by dividing the series into $\{0,1,\ldots\}$ and $\{\ldots,-2,-1\}$, using
\begin{equation}
\frac{\mu \tau^k}{1-\tau^k \mu}=\frac{1}{1-\tau^k \mu}-1,
\end{equation}
and then change the variable $k\to -k$ in one of the sum.
After a large $t$ expansion, the kernel $\mu J_\mu$ can be written as
\begin{multline} \label{eq:2}
\frac{1}{2\pi\I}\int_{e^{-2\pi \I/3}\infty}^{e^{2 \pi\I/3}\infty} \dx z\, \frac{e^{\tilde w^3/3-\tilde w(s+\delta_t/2)}}{e^{z^3/3-z (s+\delta_t/2)}}\frac{e^{(\tilde w-z)\delta_t/2}}{(z-w)(\tilde w-z)} \\
\times \frac{\exp\bigl(\bigl(c_4 \tilde w^4 - \frac{1}{2} s\tilde w^2 - \bigl(\frac{g(\mu)}{\sqrt{\sigma}}+a\bigr) \tilde w \bigr)c_2^{-1}t^{-1/3}+\Or(t^{-2/3})\bigr)}{\exp\bigl(\bigl(c_4 z^4 - \frac{1}{2} sz^2- \bigl(\frac{g(\mu)}{\sqrt{\sigma}}+a\bigr)z \bigr)c_2^{-1}t^{-1/3}+\Or(t^{-2/3})\bigr)}
\end{multline}
with $c_4=\frac{1+2\sqrt{\sigma}}{4\sqrt{\sigma}}$.
Since $\Re (z-\tilde w)<0$, we have
\begin{equation} \label{eq:3}
\frac{e^{(z-\tilde w)(s+\delta_t/2)}}{\tilde w-z}= \int_{s+\delta_t/2}^\infty \dx x\,e^{(z-\tilde w)x},
\end{equation}
and plugging (\ref{eq:3}) into (\ref{eq:2}) gives
\begin{multline}
\frac{1}{2\pi\I}\int_{s+\delta_t/2}^\infty \dx x \int_{e^{-2\pi \I/3}\infty}^{e^{2 \pi\I/3}\infty} \dx z \, \frac{e^{\tilde w^3/3-\tilde wx}}{e^{z^3/3-z x}} \frac{1}{z-w}  \\
\times \frac{\exp\bigl(\bigl(c_4 \tilde w^4 - \frac{1}{2} s\tilde w^2 - \bigl(\frac{g(\mu)}{\sqrt{\sigma}}-\frac12+a\bigr) \tilde w \bigr)c_2^{-1}t^{-1/3}+\Or(t^{-2/3})\bigr)}{\exp\bigl(\bigl(c_4 z^4 - \frac{1}{2} sz^2- \bigl(\frac{g(\mu)}{\sqrt{\sigma}}-\frac12+a\bigr)z \bigr)c_2^{-1}t^{-1/3}+\Or(t^{-2/3})\bigr)}
\end{multline}
The operator $J_\mu$ is the product $C_1\,C_2\,C_3$, where the factors have kernels
\begin{equation}
 \begin{aligned}
  C_1(w,z)&= \frac{1}{2\pi \I}\,\frac{e^{-z^3/3}}{z-w},\\
C_2(z,x)&= \frac{1}{2\pi \I}\, e^{xz} \, e^{- c_2^{-1}t^{-1/3}\big[c_4 z^4 - sz^2 /2 -z(g(\mu)/\sqrt{\sigma}-1/2+a)\big] +\Or(t^{-2/3})}, \\
C_3(x,\tilde w)&= e^{\tilde w^3/3-\tilde wx}\, e^{c_2^{-1}t^{-1/3}\big[c_4 \tilde w^4-s\tilde w^2/2- \tilde w (g(\mu)/\sqrt{\sigma}-1/2+a)\big] +\Or(t^{-2/3})}.
 \end{aligned}
\end{equation}
The operator $C_3\,C_1\,C_2$, which has the same Fredholm determinant, acts on $L^2(s+\delta_t/2,\infty)$ and has kernel with $(x,y)$ entry given by
\begin{multline} \label{eq:4}
\frac{1}{(2\pi\I)^2} \int_{e^{-2\pi \I/3}\infty}^{e^{2 \pi\I/3}\infty} \dx z  \int_{e^{-\pi \I/3}\infty}^{e^{\pi \I/3}\infty}\dx w\, \frac{e^{w^3/3- wx}}{e^{z^3/3-zy}} \frac{1}{z - w} \Bigl( 1  +\Or(t^{-2/3})  \\
+\bigl( \bigl(c_4 (w^4-z^4)- \tfrac{1}{2} s(w^2-z^2)- \bigl(\tfrac{g(\mu)}{\sqrt{\sigma}}-\tfrac12+a\bigr) (w-z) \bigr)c_2^{-1}t^{-1/3}\Bigr).
\end{multline}
Using again $\Re(z-w)<0$, we can write $\frac{1}{w-z}=\int_0^\infty \dx \lambda e^{-\lambda(w-z)}$. Thus, (\ref{eq:4}) equals
\begin{equation}
- K_{{\cal A}_2}(x,y) + c_2^{-1}t^{-1/3} \bigl(K_\text{asym}(x,y)+K_\text{sym}(x,y)\bigr) + \Or(t^{-2/3}),
\end{equation}
where $K_\text{asym}(x,y)=P(x,y)-P(y,x)$ with
\begin{equation}
P(x,y)= \int_0^\infty \dx \lambda \Ai(x+\lambda) \left[c_4 \frac{\dx^4}{\dx y^4}-\frac{s}{2} \frac{\dx^2}{\dx y^2}\right]\!\Ai(y+\lambda)
\end{equation}
is asymmetric, and $K_\text{sym}(x,y)=-\bigl(\frac{g(\mu)}{\sqrt{\sigma}}-\frac12+a\bigr)\Ai(x)\Ai(y)$ is symmetric.
Hence, we have
\begin{multline}\label{eq4.18}
\Pb\bigl(x_m(t/\gamma)\geq u\bigr)=F_{\rm GUE}(s+\delta_t/2)\\
\times\Bigl(1 - \bigl(\tfrac{G}{\sqrt{\sigma}}-\tfrac12+a\bigr) \Tr\bigl( (\Id - \chi_s K_{{\cal A}_2}\chi_s)^{-1} \chi_s (\Ai \otimes \Ai) \chi_s\bigr)t^{-1/3} \\ + \Or(t^{-2/3})\Bigr).
\end{multline}
with $\chi_s$ the projection onto $(s+\delta_t/2,\infty)$ and
\begin{equation}
G = \frac{1}{2\pi\I} \oint_{\tau < \lvert \mu \rvert < 1} \frac{\dx\mu}{\mu}  g(\mu) (\mu;\tau)_\infty.
\end{equation}
We will show that $G=a_{p,q}$ so that by choosing $a=\frac12-\frac{1}{\sqrt{\sigma}}a_{p,q}$ the prefactor of the first order correction vanishes. First note that
\begin{equation}
\frac{1}{2\pi \I} \oint_{\tau<\lvert \mu \rvert < 1} \frac{\dx\mu}{\mu} (\mu;\tau)_\infty \sum_{k=0}^\infty \frac{\mu\tau^k}{1-\tau^k \mu} =0,
\end{equation}
as the integrand has no poles inside the unit circle. So, we have
\begin{equation}
G=\frac{1}{2\pi\I} \oint_{\tau<\lvert \mu \rvert < 1} \frac{\dx \mu}{\mu} (\mu;\tau)_\infty \sum_{k=1}^\infty \frac{\tau^k}{\tau^k-\mu}.
\end{equation}
We use
\begin{equation}
\frac{1}{\mu}\frac{\tau^k}{\tau^k-\mu} = \frac{1}{\mu}+\frac{1}{\tau^k-\mu},
\end{equation}
the fact that $(\mu;\tau)_\infty$ is analytic inside the integration domain, so that the sum of the contributions of the simple poles gives
\begin{equation}
G=\sum_{k=1}^\infty \left(1-(\tau^k;\tau)_\infty\right).
\end{equation}
There is a simpler expression for $G$. Using the identity (\ref{eq102}) we get
\begin{equation}
G=-\sum_{k=1}^\infty\sum_{\ell=1}^\infty \frac{(-1)^\ell \tau^{\ell(\ell-1)/2}}{(\tau;\tau)_\ell}\,\tau^{k \ell}
=\sum_{\ell=1}^\infty \frac{(-1)^\ell \tau^{\ell(\ell-1)/2}}{(\tau;\tau)_\ell}\,\frac{\tau^\ell}{\tau^\ell-1}
\end{equation}
Then we use (\ref{eqQHypDef}) and
$(\alpha;\tau)_\ell /(\alpha\tau;\tau)_\ell =(\alpha-1)/(\alpha \tau^\ell-1)$
to get
\begin{equation}
\begin{aligned}
G&=\lim_{\alpha\to 1} \frac{1}{\alpha-1}\sum_{\ell=1}^\infty \frac{(-1)^\ell \tau^{\ell(\ell-1)/2}(\alpha;\tau)_\ell}{(\tau;\tau)_\ell(\alpha\tau;\tau)_\ell}\tau^\ell\\
&=\lim_{\alpha\to 1} \frac{1}{\alpha-1} \biggl( \QHyp{1}{1}{\alpha}{\alpha\tau}{\tau}{\tau}-1\biggr)\\
&=\lim_{\alpha\to 1} \frac{1}{\alpha-1}\lim_{\beta\to\infty}\biggl( \QHyp{2}{1}{\alpha,\beta}{\alpha\tau}{\tau}{\frac{\tau}{\beta}}-1 \biggr)
\end{aligned}
\end{equation}
where we used (\ref{eqA4}) in the last equality. Finally, the $q$-Gauss identity (\ref{eqQGauss}) leads to
\begin{equation}
\begin{aligned}
G&=\lim_{\alpha\to 1} \frac{1}{\alpha-1}\lim_{\beta\to\infty}\bigg(\frac{(\alpha \tau/\beta;\tau)_\infty (\tau;\tau)_\infty}{(\alpha \tau;\tau)_\infty (\tau/\beta;\tau)_\infty} -1\bigg) =
-\frac{\partial_{\alpha}(\alpha\tau;\tau)_\infty}{(\tau;\tau)_\infty}\bigg|_{\alpha=1}\\
& =- \partial_\alpha \ln[(\alpha\tau;\tau)_\infty]\big|_{\alpha=1} = -\partial_\alpha \sum_{\ell=0}^\infty \ln(1-\alpha \tau^{\ell+1})\bigg|_{\alpha=1} = \sum_{\ell=1}^\infty\frac{\tau^\ell}{1-\tau^\ell}.
\end{aligned}
\end{equation}
Replacing $\tau=q/p$ leads to $G=a_{p,q}$.
This and (\ref{eq4.18}) shows that
\begin{equation}
\Pb\bigl(x_n(t/\gamma)\geq u\bigr)=F_{\rm GUE}(s)(1+\Or(t^{-2/3})).
\end{equation}
\end{proof}

\appendix

\section{Discrete sums versus integrals}
\begin{lem}\label{LemmaSumIntNdim}
Let $f:\R^n\to \R$ be a smooth function with $\partial_j\partial_k f\in L^1(\R^n)$, for all $j,k=1,\ldots,n$. Then, for $\delta>0$ (small)
\begin{equation}
\bigg|\delta^n \sum_{x\in(\Z_+^*)^n} f(x\delta) - \int_{(-\frac\delta2,\infty)^n}\dx^nx\, f(x)\bigg|=\Or(\delta^2) \sum_{j,k=1}^n \int_{\R_+^n}\dx^nx\, |\partial_j\partial_k f(x)|.
\end{equation}
\end{lem}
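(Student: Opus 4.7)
The natural approach is the $n$-dimensional midpoint-rule estimate. Partition $\Omega:=(-\delta/2,\infty)^n$ into the disjoint cubes $C_x:=x\delta+[-\delta/2,\delta/2]^n$ indexed by $x\in(\Z_+^*)^n$. Then
\begin{equation*}
\delta^n\sum_x f(x\delta)-\int_\Omega f=-\sum_x E_x(f),\qquad E_x(f):=\int_{C_x}[f(y)-f(x\delta)]\,\dx y,
\end{equation*}
so it suffices to bound $\sum_x|E_x(f)|$ by $\Or(\delta^2)\sum_{j,k}\int_{\R_+^n}|\partial_j\partial_k f|$.

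In one variable the Peano-kernel representation of the midpoint rule,
\begin{equation*}
\int_{-\delta/2}^{\delta/2}g(s)\,\dx s-\delta g(0)=\int_{-\delta/2}^{\delta/2}\frac{(\delta/2-|s|)^2}{2}\,g''(s)\,\dx s,
\end{equation*}
yields the uniform bound $|E_x(g)|\leq\tfrac{\delta^2}{8}\int_{C_x}|g''|$, and telescoping over the disjoint $C_x$ gives the statement with constant $\tfrac{1}{8}$.

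For $n\geq 2$ I would reduce to the one-dimensional case via the tensor-product telescoping
\begin{equation*}
\prod_{j=1}^n I_j-\delta^n\prod_{j=1}^n T_j=\sum_{j=1}^n\Bigl(\prod_{k<j}I_k\Bigr)(I_j-\delta T_j)\Bigl(\prod_{k>j}\delta T_k\Bigr),
\end{equation*}
where $I_j$ integrates the $j$-th variable on $[-\delta/2,\delta/2]$ around $x_j\delta$ and $T_j$ evaluates at $x_j\delta$. Each summand is a pure 1D midpoint error in the coordinate $y_j$, applied after integrating the coordinates $1,\ldots,j-1$ and freezing coordinates $j+1,\ldots,n$ at midpoints. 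The 1D Peano bound then gives
\begin{equation*}
|E_x(f)|\leq\tfrac{\delta^2}{8}\sum_{j=1}^n\delta^{n-j}\int|\partial_j^2 f(y_1,\ldots,y_j,x_{j+1}\delta,\ldots,x_n\delta)|\,\dx y_1\cdots\dx y_j.
\end{equation*}
Summing on $x$, the sums on $x_1,\ldots,x_j$ recombine with the integrations in $(y_1,\ldots,y_j)$ into integrals over a subset of $\R_+^j$, and the residual factors $\delta^{n-j}\sum_{x_{j+1},\ldots,x_n}|\partial_j^2 f(\cdot,x_{j+1}\delta,\ldots)|$ are Riemann sums approximating $\int|\partial_j^2 f|$ in the frozen coordinates.

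The main subtle point is precisely this last step: dominating the Riemann sum of $|\partial_j^2 f|$ in the frozen coordinates by its integral is not automatic from $\partial_j\partial_k f\in L^1$ alone. However, the assumed smoothness of $f$ (with rapidly decaying derivatives, as is the case in all applications to the Airy-type kernels of this paper) provides the additional regularity, for example via the elementary cube-wise inequality $|g(z\delta)|\leq \delta^{-1}\int_{C_z}|g|+\int_{C_z}|\nabla g|$ applied to $g=|\partial_j^2 f|$ (after smooth approximation if needed). Collecting all $n$ contributions yields the required bound $\Or(\delta^2)\sum_{j,k}\int_{\R_+^n}|\partial_j\partial_k f|$.
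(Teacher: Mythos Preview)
Your telescoping/Peano-kernel argument is valid, but it is a considerably longer route than the paper's. The paper proceeds by a single multivariate Taylor expansion on each cube: writing
\[
\int_{(-\delta/2,\infty)^n} f = \sum_{x\in(\Z_+^*)^n}\int_{[-\delta/2,\delta/2]^n} f(x\delta+y)\,\dx^n y
\]
and expanding $f(x\delta+y)=f(x\delta)+\sum_j\partial_j f(x\delta)\,y_j+\tfrac12\sum_{j,k}y_jy_k R_{j,k}$, the linear term integrates to zero by the symmetry of the cube, and the quadratic remainder is bounded cube-wise by $\max_{u\in[-\delta/2,\delta/2]^n}|\partial_j\partial_k f(x\delta+u)|$. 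Integrating $y$ and summing over $x$ gives directly
\[
\Big|\cdots\Big|\le \frac{\delta^2}{12}\sum_{j,k}\sum_x \delta^n \max_{C_x}|\partial_j\partial_k f|,
\]
and the right-hand side is (an upper Darboux sum hence) comparable to $\sum_{j,k}\int_{\R_+^n}|\partial_j\partial_k f|$ for small $\delta$.

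Your route has the merit of producing only the diagonal derivatives $\partial_j^2 f$ in the intermediate bound, but the price is the asymmetry between integrated and frozen coordinates that you yourself flag. Two remarks on that last step. First, your suggested patch $|g(z\delta)|\le \delta^{-1}\int_{C_z}|g|+\int_{C_z}|\nabla g|$ with $g=|\partial_j^2 f|$ would call for \emph{third}-derivative integrability, which goes beyond the stated hypotheses; this is stronger than what is actually needed. Second, the paper's proof faces an analogous (and equally informally treated) ``discrete sum versus integral'' issue at the very end, so the level of rigor is comparable; the real advantage of the direct Taylor approach is that it keeps all $n$ coordinates on an equal footing and lands immediately on an upper Darboux sum, avoiding the mixed integrated/frozen structure altogether.
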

\begin{proof}
We first rewrite
\begin{equation}
\int_{(-\frac\delta2,\infty)^n}\dx^nx\, f(x)=\sum_{x\in(\Z_+^*)^n}\int_{[-\frac\delta2,\frac\delta2]^n}  \dx^ny\, f(x\delta+y)
\end{equation}
and use Taylor development
\begin{equation}
f(x\delta+y)= f(x\delta)+\sum_{j=1}^n \partial_j f(x\delta) y_j + \frac12 \sum_{j,k=1}^n y_j y_k R_{j,k}(\delta,y),
\end{equation}
with
\begin{equation}
|R_{j,k}(\delta,y)|\leq \max_{u\in [-\frac\delta2,\frac\delta2]^2} |\partial_j\partial_k f(x\delta+u)|,\quad\textrm{for }y\in [-\tfrac\delta2,\tfrac\delta2]^n
\end{equation}
to obtain (after integrating over $y$),
\begin{equation}
\bigg|\int_{(-\frac\delta2,\infty)^n}\hspace{-2em} \dx^nx\, f(x)-\sum_{x\in(\Z_+^*)^n} \delta^n f(x\delta)\bigg|
\leq \frac{\delta^2}{12}\sum_{x\in(\Z_+^\ast)^n}\sum_{j,k=1}^n \delta^n \max_{u\in [-\frac\delta2,\frac\delta2]^2} |\partial_j\partial_k f(x\delta+u)|.
\end{equation}
The statement then follows because the sum over $x$ converges, as $\delta\to 0$, to $\sum_{j,k=1}^n \int_{\R_+^n}\dx^nx\, |\partial_j\partial_k f(x)|$.
\end{proof}

\begin{lem}\label{lemBoundKgeneral}
Let $f(x_1,\ldots,x_n)=\det (K(x_i,x_j))_{1\le i,j\le n}$ with the kernel $K$ satisfying
\begin{equation} \label{eq5}
\max\{ \lvert K(x_1,x_2) \rvert , \lvert \partial_i K(x_1,x_2) \rvert,\lvert \partial_i \partial_j K(x_1,x_2)\rvert\} \le C e^{-c(x_1+x_2)}
\end{equation}
for all $x_1,x_2 \in (s,\infty)$, $i,j \in \{1,2\}$ and some positive constants $c$, $C$. Then,
\begin{equation}
\lvert \partial_i \partial_j f(x_1,\ldots,x_n)\rvert \le 4\, C^n n^{n/2} \prod_{k=1}^n e^{-2cx_k}
\end{equation}
for all $1\le i,j \le n$.
\end{lem}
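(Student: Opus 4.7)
The plan is to express $\partial_i\partial_j f$ as a sum of a bounded number of $n\times n$ determinants whose entries all obey the exponential bound \eqref{eq5}, and then apply Hadamard's inequality after symmetrically extracting the exponential decay from the rows and columns.

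First I use multilinearity of the determinant in its rows and columns. Since $K(x_k,x_l)$ depends on $x_i$ only through row $i$ (where $\partial_i$ acts as $\partial_1$) or column $i$ (where $\partial_i$ acts as $\partial_2$), a single differentiation yields
\begin{equation*}
\partial_i f=\det(A_i^{\rm r})+\det(A_i^{\rm c}),
\end{equation*}
where $A_i^{\rm r}$ is obtained from $(K(x_k,x_l))_{k,l}$ by replacing row $i$ with $(\partial_1 K(x_i,x_l))_l$, and $A_i^{\rm c}$ is obtained by replacing column $i$ with $(\partial_2 K(x_k,x_i))_k$. Applying the same rule to $\partial_j$ of each of these two terms produces at most four determinants; their entries are $K$, $\partial_1K$, $\partial_2K$, or (at a single position, in particular when $i=j$) one of $\partial_1^2K$, $\partial_2^2K$, $\partial_1\partial_2K$. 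By \eqref{eq5}, every such entry is bounded in modulus by $Ce^{-c(x_k+x_l)}$.

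Next, for any $n\times n$ matrix $M$ with $|M_{kl}|\le Ce^{-c(x_k+x_l)}$, I factor $M=D\widetilde N D$ with $D:=\diag(e^{-cx_1},\ldots,e^{-cx_n})$, so that $|\widetilde N_{kl}|\le C$ for all $k,l$. Hence $|\det M|=\prod_k e^{-2cx_k}\,|\det\widetilde N|$, and Hadamard's inequality applied to $\widetilde N$ gives $|\det\widetilde N|\le\prod_k\|\widetilde N_{k,\cdot}\|_2\le(\sqrt{n}\,C)^n=n^{n/2}C^n$. Summing this bound over the at most four determinants representing $\partial_i\partial_j f$ produces exactly $|\partial_i\partial_j f|\le 4\,C^n n^{n/2}\prod_k e^{-2cx_k}$, as claimed.

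The only steps that demand a little care are matching the constant $4$ and the exponent $2c$. The former comes from the combinatorial bookkeeping in the first step, where one must verify that the case $i=j$ (which mixes true second derivatives in row $i$ and column $i$) still yields no more than four determinants; the latter is the raison d'\^etre of the \emph{symmetric} factorization $M=D\widetilde N D$, which distributes the decay $e^{-c(x_k+x_l)}$ evenly between a row and a column and thereby promotes $e^{-cx_k}$ to $e^{-2cx_k}$ in the final product.
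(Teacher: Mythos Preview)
Your proof is correct and follows essentially the same route as the paper: expressing $\partial_i\partial_j f$ as a sum of four determinants via multilinearity, then applying Hadamard's bound. Your explicit symmetric factorization $M=D\widetilde N D$ makes transparent the step that the paper leaves implicit when it passes from Hadamard's inequality for matrices with unit-disk entries to the stated bound with $e^{-2cx_k}$.
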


\begin{proof}
Let $K_{i,}$ (resp.\ $K_{,j}$) be the matrix $(K(x_i,x_j))_{1\le i,j\le n}$ with the $i$th row (resp.\ the $j$th column) replaced by its derivative w.r.t.\ the first (resp.\ the second) variable. Then,
\begin{equation}
\partial_i f(x_1,\ldots,x_n)= \det K_{i,}+\det K_{,i}
\end{equation}
and from this
\begin{equation}
\partial_i \partial_j f(x_1,\ldots,x_n) = \det K_{ij,}+\det K_{i,j}+\det K_{j,i} + \det K_{,ji}
\end{equation}
with $K_{ij,}=(K_{i,})_{j,}$. By Hadamard's bound, the absolute value of an $n\times n$ determinant with entries in the closed unit disk is bounded by $n^{n/2}$. It then follows
\begin{equation}
\lvert \partial_i \partial_j f(x_1,\ldots,x_n)\rvert \le 4\,C^n n^{n/2} \prod_{k=1}^n e^{-2c x_k}
\end{equation}
for any $1\le i,j\le n$.
\end{proof}

\begin{lem}\label{LemAiryKernels}
For the Airy kernels, $K_{{\cal A}_2}(x,y)=\int_{\R_+}\dx \lambda \, \Ai(x+\lambda)\Ai(y+\lambda)$ and $K_{{\cal A}_1}(x,y)=\Ai(x+y)$, assumption (\ref{eq5}) of Lemma~\ref{lemBoundKgeneral} is satisfied.
\end{lem}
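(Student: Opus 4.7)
The task is to verify the pointwise exponential bounds (\ref{eq5}) for both Airy kernels and their first and second partial derivatives, uniformly on $(s,\infty)^2$.

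The starting point is the classical asymptotics of the Airy function: for every $c>0$ there exists a constant $M=M(s,c)$ such that
\begin{equation}
|\Ai(x)|+|\Ai'(x)| \le M\, e^{-c x}, \qquad x\ge s.
\end{equation}
Indeed, for large positive $x$ one has $\Ai(x),\Ai'(x)=\Or(e^{-\tfrac{2}{3}x^{3/2}})$, which beats any exponential, and on the compact interval $[s,x_0]$ (needed if $s<0$) the functions are simply bounded, which can be absorbed into $M$ by enlarging it so that $M e^{-cx}\ge \sup_{[s,x_0]}(|\Ai|+|\Ai'|)$ holds throughout that interval. Second derivatives are controlled through the Airy equation $\Ai''(x)=x\Ai(x)$, so $|\Ai''(x)|\le |x|\,|\Ai(x)|$ is likewise bounded by $C e^{-c' x}$ on $(s,\infty)$ for any $c'<c$.

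For $K_{\mathcal{A}_1}(x_1,x_2)=\Ai(x_1+x_2)$ the verification is immediate: the kernel, its first partials $\partial_i K_{\mathcal{A}_1}=\Ai'(x_1+x_2)$, and the cross derivative $\partial_1\partial_2 K_{\mathcal{A}_1}=\Ai''(x_1+x_2)=(x_1+x_2)\Ai(x_1+x_2)$ are all evaluated at $x_1+x_2\in(2s,\infty)$, so the one-variable bound above applied at the argument $x_1+x_2$ yields the desired $Ce^{-c(x_1+x_2)}$ directly.

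For $K_{\mathcal{A}_2}(x_1,x_2)=\int_0^\infty \Ai(x_1+\lambda)\Ai(x_2+\lambda)\,\dx\lambda$, I would justify differentiation under the integral using dominated convergence (the derivative integrands decay exponentially in $\lambda$ uniformly on any compact $x$-region), so that for instance
\begin{equation}
\partial_1\partial_2 K_{\mathcal{A}_2}(x_1,x_2)=\int_0^\infty \Ai'(x_1+\lambda)\Ai'(x_2+\lambda)\,\dx\lambda,
\end{equation}
and analogously for the lower-order derivatives. Applying the one-variable bound with some $c>0$ to each factor,
\begin{equation}
|\Ai^{(k)}(x_i+\lambda)\Ai^{(l)}(x_j+\lambda)|\le M^2 e^{-c(x_1+x_2)}\,e^{-2c\lambda},
\end{equation}
and integrating in $\lambda$ over $\R_+$ produces a finite factor $1/(2c)$, yielding the required bound with $C=M^2/(2c)$. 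The step involving $\partial_1\partial_2$ uses $\Ai''=x\Ai$ together with a harmless polynomial factor that is absorbed by slightly decreasing the decay rate $c$.

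There is no real obstacle; the only mild care needed is when $s<0$, where one must patch the trivial boundedness on $[s,0]$ to the super-exponential decay on $[0,\infty)$. Once $|\Ai|,|\Ai'|\le Me^{-cx}$ is established on $(s,\infty)$, the kernel bounds are a one-line consequence for $K_{\mathcal{A}_1}$ and a single exponential integral for $K_{\mathcal{A}_2}$.
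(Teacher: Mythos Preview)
Your argument is correct and follows the same two-step structure as the paper: first establish an exponential bound $|\Ai^{(k)}(x)|\le M e^{-cx}$ for $k=0,1,2$, then feed this into the definitions of $K_{\mathcal A_1}$ and $K_{\mathcal A_2}$ and integrate over $\lambda$.

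The only methodological difference lies in step one. You invoke the classical large-$x$ asymptotics $\Ai(x),\Ai'(x)=\Or(e^{-\frac{2}{3}x^{3/2}})$ and patch them over a compact interval $[s,x_0]$, then treat $\Ai''$ via the Airy equation $\Ai''(x)=x\Ai(x)$, absorbing the polynomial factor by sacrificing a bit of the exponent. The paper instead takes the contour-integral representation $\Ai(x)=\frac{1}{2\pi\I}\int_{\I\R+\delta}e^{z^3/3-zx}\,\dx z$ and differentiates under the integral, which gives $|\Ai^{(k)}(x)|\le C_\delta e^{-\delta x}$ for $k=0,1,2$ uniformly on all of $\R$ in one stroke, with no need to handle $\Ai''$ separately or to reduce the decay rate. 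Your route is slightly more elementary in that it avoids contour integrals; the paper's route is a touch cleaner because the second-derivative bound and the uniformity in $x$ come for free. Either way the kernel estimates follow exactly as you describe.
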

\begin{proof}
It is easy to see from the integral representation
\begin{equation}\label{eqAiry}
\Ai(x)=\frac{1}{2\pi\I} \int_{\infty e^{-\pi \I/3}}^{\infty e^{\pi \I/3}} \dx z\, e^{z^3/3-z\, x}=\frac{1}{2\pi\I} \int_{\I\R+\delta} \dx z\, e^{z^3/3-z\, x},\quad \e>0,
\end{equation}
that for any $\delta>0$, there exists a constant $C_\delta \in (0,\infty)$ so that
\begin{equation}\label{eqBoundsAirys}
\max \{ \lvert \Ai(x)\rvert, \lvert \Ai'(x)\rvert, \lvert \Ai''(x) \rvert \} \le C_\delta e^{-\delta x}
\end{equation}
uniformly in $x\in\R$.

In the case $K(x_1,x_2)=\int_{\R_+}\dx \lambda \, \Ai(x_1+\lambda)\Ai(x_2+\lambda)$ we get from the bounds (\ref{eqBoundsAirys}) with $\e=\frac{1}{2}$, after integration with respect to $\lambda$, that
\begin{equation}\label{eq99}
\max \{ \lvert K(x_1,x_2)\rvert, \lvert \partial_i K(x_1,x_2) \rvert, \lvert \partial_i \partial_j K(x_1,x_2)\rvert \} \leq C e^{-(x_1+x_2)/2}
\end{equation}
for some constant $C>0$ and all $i,j\in \{1,2\}$.

The case $K(x_1,x_2)=\Ai(x_1+x_2)$ is even easier, since the bound (\ref{eq99}) comes directly from (\ref{eqBoundsAirys}).
\end{proof}

\section{\texorpdfstring{$q$-Pochhammer symbols, $q$-hypergeometric functions}{q-Pochhammer symbols, q-hypergeometric functions}}\label{SectPochhammer}
Here we collect some identities on $q$-Pochhammer symbols and $q$-hyper\-geometric functions, used for the PASEP. We use the standards as in~\cite{KS96}. The $q$-Pochhammer symbol is defined by
\begin{equation}
(\mu;q)_\infty=\prod_{k=0}^\infty (1-\mu q^k),\quad\textrm{and}\quad
(\mu;q)_n=\prod_{k=0}^{n-1} (1-\mu q^k).
\end{equation}
They satisfies the following identities:
\begin{equation}\label{eq102}
(\mu;q)_n =\frac{(\mu;q)_\infty}{(\mu q^n;q)_\infty},\quad
(\mu;q)_\infty =\sum_{n=0}^\infty \frac{(-1)^n q^{n(n-1)/2}}{(q;q)_n}\mu^n
\end{equation}
so that in particular $(0;q)_\infty=1$ and $(1;q)_\infty=0$.

The $q$-hypergeometric function is defined by
\begin{equation}\label{eqQHypDef}
\QHyp{r}{s}{a_1,\ldots,a_r}{b_1,\ldots,b_s}{q}{z}=\sum_{n=0}^\infty
\frac{(a_1;q)_n\cdots (a_r;q)_n}{(b_1;q)_n\cdots (b_s;q)_n}\frac{z^n}{(q;q)_n}\left((-1)^n q^{n(n-2)/2}\right)^{1+s-r}.
\end{equation}

In particular, it holds
\begin{equation}\label{eqA4}
\QHyp{r-1}{s}{a_1,\ldots,a_{r-1}}{b_1,\ldots,b_s}{q}{z}
=\lim_{a_r\to\infty}
\QHyp{r}{s}{a_1,\ldots,a_r}{b_1,\ldots,b_s}{q}{\frac{z}{a_r}}.
\end{equation}
The $q$-Gauss identity is
\begin{equation}\label{eqQGauss}
\QHyp{2}{1}{\alpha,\beta}{\gamma}{q}{\gamma}
=\frac{(\gamma/\alpha;q)_\infty (\gamma/\beta;q)_\infty}{(\gamma;q)_\infty (\gamma/(\alpha\beta);q)_\infty}.
\end{equation}


\end{document}